\begin{document}
\title{
Farsighted Risk Mitigation of Lateral Movement Using Dynamic Cognitive Honeypots
}
\titlerunning{Cognitive Honeypots against Persistent Lateral Movement}
%

\author{Linan Huang \and
Quanyan Zhu\thanks{This research is partially supported by awards ECCS-1847056, CNS-1544782, CNS-2027884, and SES-1541164 from National Science of Foundation (NSF), and grant W911NF-19-1-0041 from Army Research Office (ARO).} }
\authorrunning{L. Huang and Q. Zhu}
%
\institute{Department of Electrical and Computer Engineering, New York University\\
 2 MetroTech Center, Brooklyn, NY, 11201, USA \\
\email{\{lh2328,qz494\}@nyu.edu}
}
\maketitle              
\begin{abstract}
Lateral movement of advanced persistent threats has posed a severe security challenge. 
Due to the stealthy and persistent nature of the lateral movement, defenders need to consider time and spatial locations holistically to discover latent attack paths across a large time-scale and achieve long-term security for the target assets. In this work, we propose a time-expanded random network to model the stochastic service links in the user-host enterprise network and the adversarial lateral movement. We design cognitive honeypots at idle production nodes and disguise honey links as service links to detect and deter the adversarial lateral movement. The location of the honeypot changes randomly at different times and increases the honeypots' stealthiness. Since the defender does not know whether, when, and where the initial intrusion and the lateral movement occur, the honeypot policy aims to reduce the target assets' Long-Term Vulnerability (LTV) for proactive and persistent protection. 
We further characterize three tradeoffs, i.e., the probability of interference, the stealthiness level, and the roaming cost. 
To counter the curse of multiple attack paths, we propose an iterative algorithm and approximate the LTV with the union bound for computationally efficient deployment of cognitive honeypots. 
The results of the vulnerability analysis illustrate the bounds, trends, and a residue of LTV when the adversarial lateral movement has infinite duration. 
Besides honeypot policies, we obtain a critical threshold of compromisability to guide the design and modification of the current system parameters for a higher level of long-term security. We show that the target node can achieve zero vulnerability under infinite stages of lateral movement if the probability of movement deterrence is not less than the threshold.



\keywords{Advanced persistent threats \and Lateral movement  \and Time-expanded network \and Attack graph  \and Cognitive security \and Long-term security  \and Risk analysis}
\end{abstract}

\section{Introduction}
Advanced Persistent Threats (APTs) have recently emerged as a critical security challenge to enterprise networks.  Their stealthy, persistent, and sophisticated nature has made it difficult to prevent, detect, and deter them.
The life cycle of APT attacks consists of multiple stages and phases \cite{ATTACK,zhu2018multi}.  After the initial intrusion by phishing emails, social engineering, or an infected USB, an attacker can enter the enterprise network from an external network domain. 
Then, the attacker establishes a foothold, escalates privileges, and moves laterally in the enterprise network to search for valuable assets as his final target. The targeted assets can be either a database with confidential information or a controller in an industrial plant as shown in the instance of APT27 \cite{APT27} and Stuxnet, respectively. 
Valuable assets are usually segregated and cannot be compromised by an attacker directly from the external domain in the initial intrusion phase. Therefore, it is indispensable for the attacker to exploit the internal network flows of legitimate service links between hosts and users to move laterally from the location of the initial intrusion to the final target of valuable assets.

Early detection of the adversarial lateral movement is challenging. 
First, an APT attacker is persistent. 
The long duration between the initial intrusion and the final target compromise makes it difficult for the defender to relate alarms over a time scale of years and piece together shreds of evidence to identify the attack path. 
Second, an APT attack is stealthy. 
Each time the attacker has compromised a new network entity, such as a host, and obtained its root privilege, he does not take any subversive actions on the compromised entity and remains ``under the radar''. 
These entities are only used as the attacker's stepping stones toward the final target. 
Third, the high volume of network traffic during regular operation generates a considerable number of false alarms, and thus significantly delays and reduces the accuracy of adversary detection. 
Without an accurate and timely detection of adversarial lateral movement, defensive methods, such as patching and frequent resetting  of  suspicious entities, become cost-prohibitive and significantly reduce  operational efficiency as those entities become unavailable for the incoming service links.

Honeypot is a promising active defense method of deception. A honeypot is a  monitored and regulated trap that is disguised to be a valuable asset for the attacker to compromise.
 Since legitimate users do not have the motivation to access a honeypot, any inbound network traffic directly reveals the attack with negligible false alarms. 
The off-the-shelf honeypots are applied at fixed locations and on isolated machines that are not involved in the regular operation.  
Honeypots at fixed locations are easy to implement. Isolating the honeypot completely from the production system can reduce the risk that an attacker uses the honeypot as a pivot node to penetrate the production system \cite{spitzner2003honeypots}. 
Despite the advantages, honeypots at fixed and isolated locations can be easily identified by sophisticated attackers \cite{krawetz2004anti} and become ineffective. 
Motivated by the concept of cognitive radio \cite{mitola1999cognitive} and roaming honeypots \cite{khattab2004roaming}, we develop the concept of cognitive honeypots to mitigate the Long-Term Vulnerability (LTV) of a target asset during the adversarial lateral movement. 
Contrary to the off-the-shelf honeypots, the cognitive honeypots aim to leverage idle machines of the production system and configure them into honeypots to make the deception indecipherable and unpredictable for the attacker. 
Since the defender reconfigures part of the production systems into honeypots, she needs to guarantee that the honeypot configuration  does not interfere with service links. 
Also, the defender needs to balance the utility of security with the cost of reconfiguration. 
We manage to consider the above three factors, i.e., the level of stealthiness/indecipherability, the probability of interference, and the cost of roaming, in determining the optimal honeypot policy that minimizes the target asset's LTV. 

In this work, we model the adversarial lateral movement in the enterprise network as a time-expanded network \cite{casteigts2012time}, where the additional temporal links connect the isolated spatial service links across a long time to reveal persistent attack paths explicitly. 
We consider the scenario where service links occur randomly at each stage and the attacker can exploit these service links for lateral movement with a success probability. 
Due to the \textit{curse of multiple attack paths}, the computation complexity increases dramatically with the network size and the number of stages. 
To efficiently compute the optimal policy for the cognitive honeypot, we propose an iterative algorithm and approximate the LTV by its upper and lower bounds, which result in the optimal conservative and risky honeypot policies, respectively. 
The results of the  vulnerability analysis illustrate the limit and the bounds of LTV when the duration of lateral movement goes to infinity under direct and indirect policies, respectively. 
Without proper mitigation strategies, vulnerability never decreases over stages and the target node is doom to be compromised. Under the improved honeypot strategies, a \textit{vulnerability residue} exists and LTV cannot be reduced to $0$. 
Besides honeypot policies, we further investigate the possibility of changing the frequency of service links and the probability of successful compromise for \textit{long-term security}. We manage to character a critical threshold for the \textit{Probability of Movement Deterrence (\textbf{PoMD})} and prove that the target node can achieve zero vulnerability even when the adversarial lateral movement last for infinite stages if POMD is not less than the threshold.


\subsection{Related Works}

\subsubsection{Lateral Movement Detection and Mitigation}
Various methods have been proposed for lateral movement detection \cite{liu2018latte,tian2019real,lah2018proposed}. 
However, most of them rely on accurate and timely identification of the initial intrusion, which may be challenging to achieve. 
Mitigation methods of network topology change have also been proposed to delay  lateral movement \cite{noureddine2016game} and reduce its adversarial impact \cite{purvine2016graph}. Authors in \cite{HuangAPT,huang2018PER} have proposed a proactive defense-in-depth model against the multi-stage multi-phase attacks.  
Previous works have also analyzed security metrics, such as reachability \cite{purvine2016graph}, 
enforceability \cite{alsaleh2018verifying}, and survivability \cite{shi2019quantitative}, to reduce risk and loss under lateral movement attacks. 
Compared to these works, our work applies honeypots and honey links to detect and mitigate lateral movement. Moreover, we enable the analysis of the target's LTV under an undetected initial intrusion and an arbitrary duration of lateral movement.

\subsubsection{Cognitive Honeypots}
Honeypots as a defensive deception method have been widely studied in the literature.
The authors in \cite{pawlick2019optimal,huang2019adaptive,huang2018analysis} have investigated the optimal timing and actions to attract and engage attackers in the honeypot. The authors in \cite{huang2020game} have investigated the optimal honeypot configuration and the signaling mechanism to simultaneously incentivize attackers and disincentivize legitimate users to access a honeypot. 
All these honeypots are assumed to be placed at fixed and segregated locations. 
In this work, we consider cognitive honeypots that use the idle machines of the production system to increase the stealthiness of honeypots. 
The terminology of ``cognitive honeypots'' has appeared in \cite{goldberg2017cognitive} but refers to a cognition of the suspicion level.
The authors in \cite{horak2019optimizing} have investigated the optimal honeypot locations during the adversarial lateral movement to prevent the attacker from compromising the target node. 
Their honeypot policy requires a partial observation of the state, which may not be available as a result of the attacker's stealthiness. 
Our work assumes that the defender does not know whether, when, or where the initial intrusion and the lateral movement occur in the network. 
Without real-time feedback information such as alerts of node compromise, the cognitive honeypot provides proactive and persistent protection of the valuable asset.

\subsubsection{Time-Expanded Network}
Time-expanded networks have been applied in transportation \cite{wang2015efficient}, 
 satellite communications \cite{jiang2020reinforcement},  and network security \cite{xu2019cybersecurity}. 
Since the transportation planning and satellite communications follow a timetable, the time-expanded networks in these applications usually have time-varying links that are deterministic and known at all stages. 
In enterprise networks, the defender does not know which service links will be used in the ensuing stages. 
Thus, we consider a time-expanded network with random topology. 
Compared to attack graphs (e.g., \cite{kaynar2016taxonomy}), which focus on capturing the paths of an attack, 
the time-expanded network explicitly portrays the timing of the attacks and captures the temporal information of the legitimate network flows and the adversarial lateral movement.  


\subsection{Notation and Organization of the Paper}
Throughout this paper, we use the pronoun `he' for the attacker and `she' for the defender. The superscript represents the time index.  The calligraphic letter $\mathcal{V}$ represents a set and $\mathcal{V} \setminus \mathcal{V}_I$ means the set of elements in $\mathcal{V}$ but not in $\mathcal{V}_I$.  
We summarize important notations in Table \ref{table:notation} for readers' convenience. 
\begin{table}[]
\centering
\caption{Summary of notations. 
\label{table:notation}}
\begin{tabular}{ll}
\hline
\textbf{Variable}      & \textbf{Meaning}      \\ \hline
$\mathcal{V}=\{\mathcal{V}_U,\mathcal{V}_H\}$ & Node set of users and hosts \\
$N=|\mathcal{V}|$ & Number of user and host nodes \\
$\mathcal{V}_I \subseteq \mathcal{V}$ & Demilitarized Zone (DMZ), i.e., the node-set of potential initial intrusion  \\
$\mathcal{V}_D\subseteq \mathcal{V}$ & The node-set that can be reconfigured as honeypots  \\
$\mathcal{V}_S$ & The set of all the subsets of $\mathcal{V}$ \\
$n_{j_0}\in \mathcal{V}\setminus\mathcal{V}_I$ & The target node that contains valuable assets \\ 
$\Delta k\in \mathbb{Z}_0^+$ & The length of the adversarial lateral movement\\
$\rho_i$ & The probability that the initial intrusion occurs at node $n_i\in \mathcal{V}_I$ \\ 
$\beta$ & The probability / frequency of service links \\
$\lambda$ & The probability of a successful compromise \\
$\gamma$ & The probability of honey links \\
$q_{i,j}$ & The probability that the attacker identifies the honey link from $n_i$ to $n_j$ \\
\hline
\end{tabular}
\end{table}
The rest of the paper is organized as follows. Section \ref{sec:model} introduces the time-expanded network to model the random arrival of the service links, the adversarial lateral movement, and the implementation of cognitive honeypots. 
 In Section \ref{sec:riskminimization}, we compute the optimal honeypot policy dependent on the level of stealthiness, the probability of interference, and the cost of roaming. The LTV of the target node is then analyzed. Section \ref{sec:conclusion} concludes the paper.

\section{Chronological Enterprise Network Model}
\label{sec:model}
 \begin{figure*}[t]
\centering
\includegraphics[width=0.80 \textwidth]{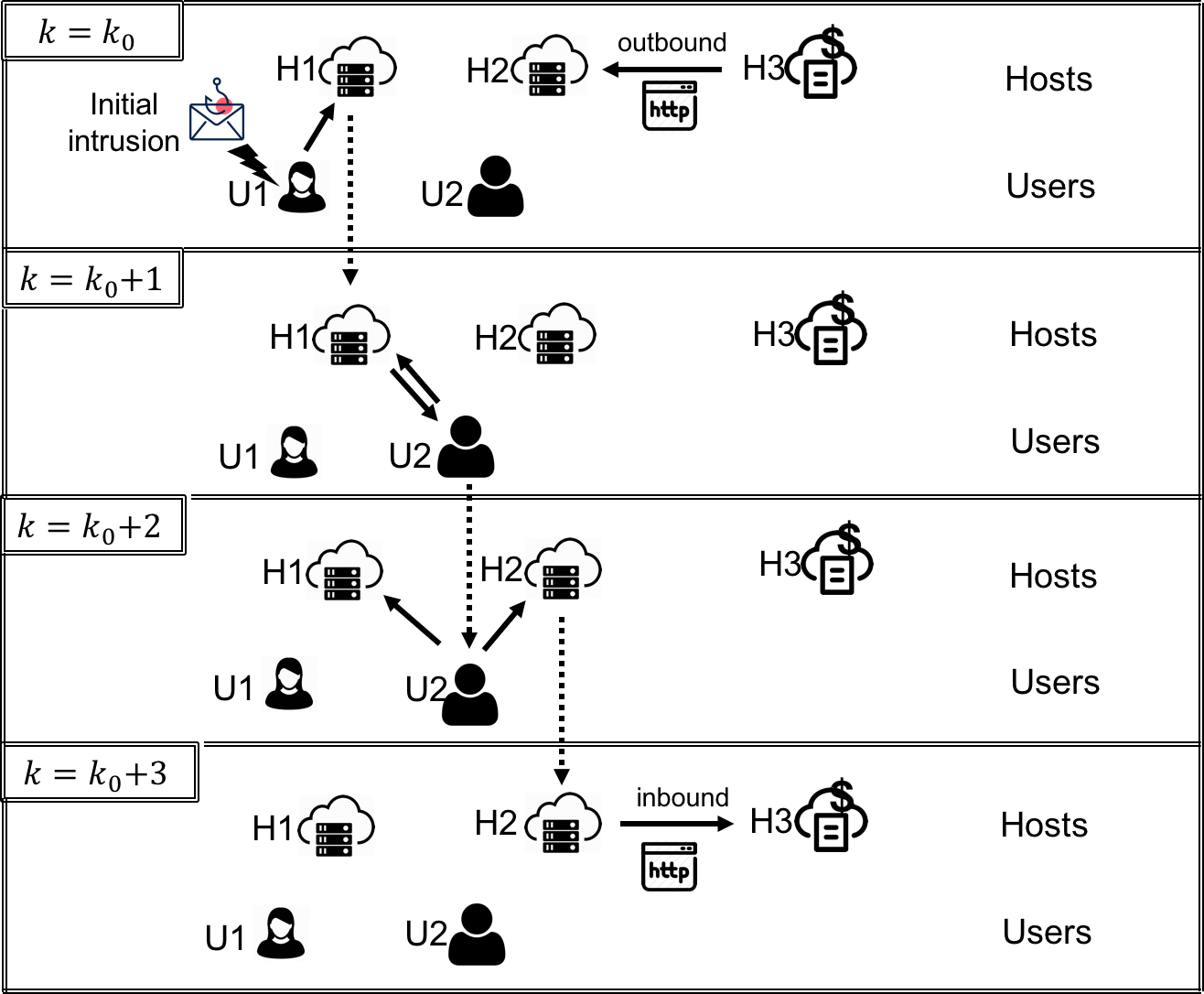}
\caption{ 
A sequence of user-host networks with service links in chronological order under  discrete stage-index $k$. 
The initial stage $k_0$ is the stage of the attacker's initial intrusion yet the defender does not know the value of $k_0$. 
The solid arrows show the direction of the user-host and host-host network flows. By incorporating part of temporal links denoted by the dashed arrows, we reveal the attack path over a long period explicitly. 
}
\label{fig:ScenarioDiag}
\end{figure*}  
We model the normal operation of an enterprise network over a continuous period as a sequence of user-host networks in chronological order. 
As shown in Fig. \ref{fig:ScenarioDiag}, nodes U$1$ and U$2$ represent the two users' client computers. 
Nodes H$1$, H$2$, and H$3$ represent three hosts in the network. 
In particular, host H$3$ stores confidential information or controls a critical actuator, thus the defender needs to protect H$3$ from attacks. 
Define $\mathcal{V}:=\{\mathcal{V}_U, \mathcal{V}_H\}$ as the node set where $\mathcal{V}_U, \mathcal{V}_H$ are the sets of the user nodes and hosts, respectively. 
The solid arrows represent two types of service links, i.e., the user-host connections and the host-host communications through an application such as HTTP \cite{chen2019enterprise}. 
Users such as U$1$ and U$2$ can access non-confidential hosts, such as H$1$ and H$2$, through their client computers for upload and/or download. 
However, to prevent data theft and physical damages, host H$3$ is inaccessible to users; e.g., there are no service links from U$1$ or U$2$ to H$3$ at any stage $k$. 
Since the normal operation requires data exchanges among hosts,  directed network flows exist among hosts at different stages; e.g., H$3$ has an outbound connection to H$2$ at stage $k=k_0$ and an inbound connection from H$2$ at stage $k=k_0+3$. 
We assume that both types of service links occur randomly and last for a random but finite duration. Whenever there is a change of network topology, i.e., adding or deleting the user-host and host-host links, we define it as a new stage. 
We can characterize the chronological network as a series of user-host networks at discrete stages $k=k_0,k_0+1,\cdots,k_0+\Delta k$, where the initial stage $k_0\in \mathbb{Z}^+$ and $\Delta k\in \mathbb{Z}_0^+$. 
Since APTs are stealthy, the defender may not know the value of $k_0$, i.e., when the initial intrusion happens or has already happened. The lack of accurate and timely identification of the initial intrusion brings a significant challenge to detect and deter the lateral movement.

\subsection{Time-Expanded Network and Random Service Links}
\label{subsec:Time-Expanded Network}
We abstract the discrete series of networks in Fig. \ref{fig:ScenarioDiag} from $k\in \{k_0,\cdots,k_0+\Delta k\}$ as a time-expanded network $\mathcal{G}=(\mathcal{V},\mathcal{E},\Delta k)$ in Fig. \ref{fig:TVG}. 
\begin{figure*}[t]
\centering
\includegraphics[width=0.80 \textwidth]{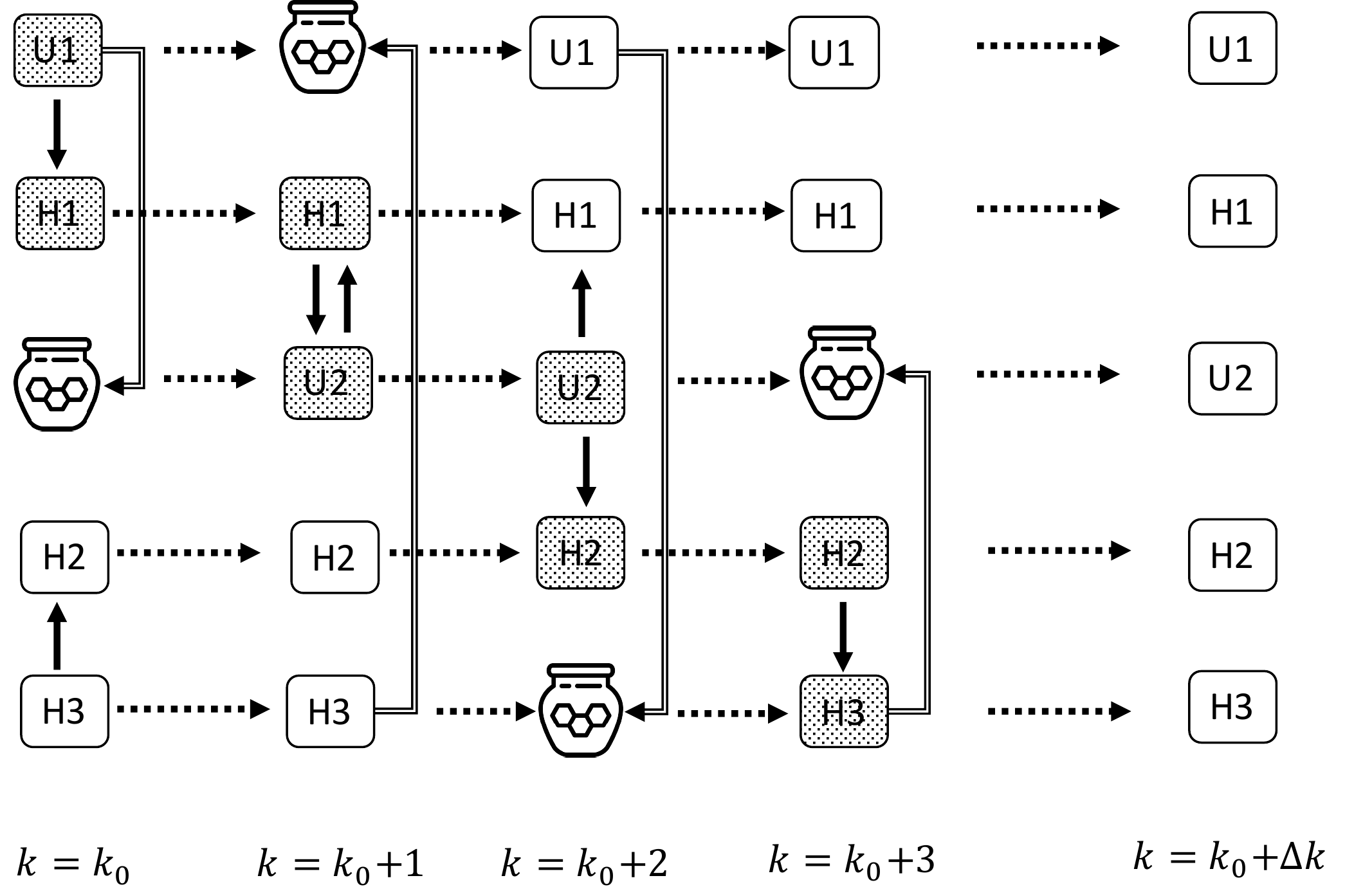}
\caption{ 
Time-expanded network $\mathcal{G}=\{\mathcal{V},\mathcal{E},\Delta k\}$ for the adversarial lateral movement and the cognitive honeypot configuration. 
The solid, dashed,  double-lined  arrows represent the service links, the temporal connections, and the honey links to honeypots, respectively. The shadowed nodes reveal the attack path from U$1$ to H$3$ explicitly over $\Delta k=3$ stages. 
}
\label{fig:TVG}
\end{figure*}  
In the time-expanded network, we distinguish the same user or host node by the stage $k$ and define $n_i^k\in \mathcal{V}$ as the $i$-th node in set $\mathcal{V}$ at stage $k\in \{k_0,\cdots,k_0+\Delta k\}$. 
We drop the superscript $k$ if we refer to the node rather than the node at stage $k$ or the time does not matter. 
We can assume without loss of generality that the number of nodes  $N:=|\mathcal{V}|$ does not change with time as we can let $\mathcal{V}$ contain all the potential users and hosts in the enterprise network over $\Delta k$ stages. 
The link set $\mathcal{E}:=\{\mathcal{E}^{k_0},\cdots,\mathcal{E}^{k_0+\Delta k} \} \cup \{\mathcal{E}_C^{k_0},\cdots, \mathcal{E}_C^{k_0+\Delta k-1} \}$ consists of two parts. 
On the one hand, the user-host and  host-host connections at each stage $k\in \{k_0,\cdots,k_0+\Delta k\}$ are represented by the set $\mathcal{E}^k=\{e(n_i^k,n_j^k)\in \{0,1\} | n_i^k,n_j^k\in \mathcal{V}, i\neq j, \forall i,j\in \{1,\cdots, N\} \} $. 
On the other hand, set $\mathcal{E}_C^k:=\{e(n_i^k,n_i^{k+1})=1 | n_i^k,n_i^{k+1}\in \mathcal{V}, \forall i\in \{1,\cdots, N\} \}$ contains the virtual temporal links from stage $k$ to $k+1$. 
A link exists if $e(\cdot,\cdot)=1$ and does not if  $e(\cdot,\cdot)=0$.  
The time-expanded network $\mathcal{G}$ is   a directed graph due to the temporal causality represented by the set $\mathcal{E}_C^k, k\in \{k_0,\cdots,k_0+\Delta k-1\}$. 

Since the user-host and the host-host connections happen randomly at each stage, we assume that a service link from node $n_i^k \in \mathcal{V}$ to node $n_j^k\in \mathcal{V}\setminus \{n_i^k\}$ exists with probability $\beta_{i,j}\in [0,1]$ for any stage $k\in \{k_0,\cdots,k_0+\Delta k\}$. 
If a connection from node $n_i^k$ to $n_j^k$ is prohibitive; e.g., U$1$ cannot access H$3$ in Fig. \ref{fig:ScenarioDiag}, then $\beta_{i,j}=0$. 
We can define $\beta:=\{\beta_{i,j}\},i,j\in\{1, \cdots, N\}$, as the service-link generating matrix without loss of generality by letting $\beta_{i,i}=0, \forall i\in \{1, \cdots, N\}$. In this work, we consider a time-invariant $\beta$ whose value can be estimated empirically from long-term historical data\footnote{For example, we can use the user-computer authentication dataset from the Los Alamos National Laboratory enterprise network \cite{hagberg-2014-credential} to estimate the probability of user-host service links over a long period. The dataset is available at \url{https://csr.lanl.gov/data/auth/}.}.
The service links at each stage may only involve a small number of nodes and leave other nodes idle.
\begin{definition}
\label{def:idle}
A node $n_i^k\in \mathcal{V}$ is said to be \textbf{idle} at stage $k$ if it is neither the source nor the sink node of any service link 
at stage $k$, i.e., $e(n_i^k,n_j^k)=0,e(n_j^k,n_i^k)=0, \forall n_j^k\in \mathcal{V}$. 
\end{definition}

\subsection{Attack Model of Lateral Movement over a Long Duration}
\label{subsec:attack model}
We assume that the initial intrusion can only happen at a subset of $N$ nodes $\mathcal{V}_I \subseteq \mathcal{V}$ due to the network segregation. We can refer to $\mathcal{V}_I$ as the Demilitarized Zone (DMZ). 
Take Fig. \ref{fig:ScenarioDiag} as an example, if all hosts in the enterprise network are segregated from the Internet, the initial intrusion can only happen to the client computer of U$1$ or U$2$ through phishing emails or social engineering. 
Although network segregation narrows down the potential location of initial intrusion from $\mathcal{V}$ to the subset $\mathcal{V}_I$ that may contain only one node, it is still challenging for the defender to prevent the nodes in $\mathcal{V}_I$ from an initial intrusion as the defender cannot determine \textit{when} the initial intrusion happens; i.e., the value of $k_0$ is unknown. 
In this work, we assume that the initial intrusion only happens to one node in set $\mathcal{V}_I$ at a time; i.e., no concurrent intrusions happen. 
Once the attacker has entered the enterprise network via the initial intrusion from an external network domain, he does not launch new intrusions from the  external domain to compromise more nodes in $\mathcal{V}_I$. Instead, the attacker can exploit the internal service links to move laterally over time, which is much  stealthier than intrusions from external network domains. 
 For example, after the attacker has controlled U$1$'s computer by phishing emails, he would not send phishing emails to other users from the external network domain, which increases his probability of being detected. 
We define $\rho_i\in [0,1]$ as the probability that the initial intrusion happens at node $n_i^{k_0} \in \mathcal{V}_I, \forall k_0\in \mathbb{Z}^+$. The probability satisfies $\sum_{i\in \mathcal{V}_I } \rho_i=1$ and is assumed to be independent of the stage $k_0$. 
This probability of initial intrusion can be estimated based on the node's vulnerability assessed by historical data, red team exercises, and the Common Vulnerability Scoring System (CVSS) \cite{mell2006common}.

After the initial intrusion, the attacker can exploit service links at different  stages by various techniques to move laterally, such as Pass the Hash (PtH), taint shared content, and remote service session hijacking \cite{ATTACK}. 
Take PtH as an example, when a user enters the password and logs into host H$1$ from a compromised client computer U$1$ at stage $k_0$ as shown in Fig. \ref{fig:ScenarioDiag}, the attacker at U$1$ can capture the valid password hashes for accessing host H$1$ by credential access technique. Then, the attacker can use the captured hashes to access the host H$1$ for all the future stage $k>k_0$. 
The attacker can also compromise a user node from a compromised host by tainting the shared content, i.e., adding malicious scripts to valid files in the host. Then, the malicious code can be executed when user U$2$ downloads those files from H$1$ at stage $k_0+1$. 
PtH (resp. tainting shared content) enables an adversarial lateral movement from a user node (resp. host node) to a host node (resp. user node). The attacker can also use remote service session hijacking, such as Secure Shell (SSH) hijacking and Remote Desktop Protocol (RDP) hijacking, to move laterally between hosts by hijacking the inbound or outbound network flows. 
In this work, we assume that once the attacker compromises a node, he retains the control of the node for the given length of time window $\Delta k$ determined by the defender. For example, the defender can require users to update their password every $\Delta k$ days to invalidate the PtH attack. 
During the time window, i.e., from the initial intrusion $k=k_0$ to $k=k_0+\Delta k$, the attacker can launch simultaneous attacks from all the compromised nodes to move laterally whenever there are outbound service links from them.  
If there are multiple service links from one compromised node, the attacker can also compromise all the sink nodes of these service links within the stage.  
Note that the only objective of the attacker is to search for valuable nodes (e.g., H$3$), compromise it, and then launch subversive attacks for data theft and physical damages. 
Thus, we assume that the attack does not launch any subversive attacks in all the compromised nodes except at the target node to remain stealthy.  
That is, even though the attacker retains the control of the compromised nodes, he only uses them as stepping stones to reach the target node.

The persistent lateral movement over a long time period enables the attacker to reach and compromise segregated nodes that are not in the DMZ $\mathcal{V}_I$. 
In both Fig. \ref{fig:ScenarioDiag} and Fig. \ref{fig:TVG}, although the network has no direct service links, represented by solid arrows, from U$1$ to H$3$ at each stage, the cascade of \textit{static security} in all stages does not result in \textit{long-term security} over $\Delta k=3$ stages.  
After we add the temporal links represented by the dashed arrows and consider stages and spatial locations holistically, we can see the attack path from the initial intrusion node U$1$ to the target node H$3$ over $\Delta k=3$ stages as highlighted by the shadows in Fig. \ref{fig:TVG}. 
The temporal order of the service links affects the likelihood that the attacker can compromise the target node. 
For example, if we exchange the services links that happen at stage $k_0+1$ and  stage $k_0+2$, then the attacker from node U1 cannot reach H$3$ in $\Delta k=3$ stages. Since the attacker can launch simultaneous attacks from multiple compromised nodes to move laterally, there can exist multiple attack paths from an initial intrusion node to the target node. 

The adversarial exploitation of service links is not always successful due to the defender's mitigation technologies against lateral movement techniques \cite{ATTACK}. For example, the firewall rules to block  RDP traffic between hosts can invalidate RDP hijacking. 
If the attacker has compromised nodes $n^{k'}_i\in \mathcal{V}$ before stage $k>k'$ and a service link from $n_i^k$ to $n_j^k\in \mathcal{V} \setminus \{n_i^k\}$ exists at stage $k$, i.e., $e(n_i^k,n_j^k)=1$, we can define $\lambda_{i,j}\in [0,1]$ as the probability that the attacker at node $n_i^k$ successfully compromises node $n_j^k$, which is assumed to be independent of stage $k$. 

\subsection{Cognitive Honeypot} 
\label{subsec:cog honeypot}
The lateral movement of persistent and stealthy attacks makes the enterprise network insecure in the long run. 
The high rates of false alarms and the miss detection of both the initial external intrusion and the following internal compromise make it challenging for the defender to identify the set of nodes that have been compromised. 
Thus, the defender needs to patch and reset all suspicious nodes at all stages to deter the attacks, which can be cost-prohibitive. 

Honeypots are a promising active defense method to detect and deter these persistent and stealthy attacks by deception \cite{nawrocki2016survey}. In this paper, the connection from a service node to a honeypot is referred to as a honey link. The defender disguises a honey link as a service link to attract attackers.  For example, the defender can start a session with remote services from a host to a honeypot. The attacker who has compromised the host will be detected once he hijacks the remote service session and carries out actions in the honeypots. 
Since regular honeypots are implemented at fixed locations and on machines that are never involved in the regular operation, advanced attacks like APTs can identify the honeypots and avoid accessing them. 
Motivated by the roaming honeypot \cite{khattab2004roaming} and the fact that the service links at each stage only involve a small number of nodes, we develop the following cognitive honeypot configuration that utilizes and reconfigures different idle nodes at different stages as honeypots. 
Let $\mathcal{V}_D\subseteq \mathcal{V}$ be the subset of nodes that can be reconfigured as honeypots when idle. 
At each stage $k$, the defender randomly selects a node $n_w^k\in \mathcal{V}_D$ to be the potential honeypot and creates a random honey link from other nodes to $n_w^k$. 
Since disguising a honeypot as a normal node requires emulating massive services and the continuous monitoring of all inbound network flows are costly, 
we assume that the defender sets up at most one honeypot and monitors one honey link at each stage. 

As shown in Fig. \ref{fig:TVG}, U$1$, H$2$, and H$3$ are idle at stage $k_0+1$ and U$1$ is reconfigured as the honeypot. The link from H$3$ to U$1$ is the honey link which is monitored by the defender. 
At stage $k_0$, U$2$ is the only idle node and is reconfigured as the honeypot with a honey link from U$1$ to U$2$. 
As stated in Section \ref{subsec:attack model}, the attacker who has compromised U$1$ at stage $k_0$ remains stealthy and does not sabotage any normal operations.  Thus, the defender can reconfigure U$1$ as a honeypot at stage $k_0+1$. However, the honeypot of U1 at stage $k_0+1$ cannot identify the attacker by monitoring all the inbound traffic as he has already compromised U$1$. 
On the contrary, the honeypots at stage $k_0$ and $k_0+2$ can trap the attackers who have compromised U$1$ and mistaken the honey links as service links\footnote{The defender would avoid configuring honey links from the target node to the honeypot. If the attacker has not compromised the target node H$3$ as shown in stage $k_0+1$, the honeypot cannot capture the attacker. If the attacker has compromised the target node as shown in stage $k_0+3$, then the late detection cannot reduce the loss that has already been made.}. 
Theoretically, the honeypot can achieve zero false alarms as the legitimate network flows should occur only at the service links. For example, although the existence of the honey link at stage $k_0$ enables legitimate users at U$1$ to access another user's computer U$2$, a legitimate user aiming to finish the service link from U$1$ to H$1$ should not access any irrelevant nodes other than host H$1$. On the other hand, an attacker at U$1$ cannot tell whether the links from U$1$ to H$1$ and U$2$ are service links or honey links. Thus, only an attacker at U$1$ can access the honeypot U$2$ at stage $k_0$. 

\subsubsection{Random Honeypot Configuration and Detection}
Since the defender can neither predict future service links nor determine the set of compromised nodes at the current stage, she needs to develop a time-independent policy $\gamma:=\{\gamma_{l,w} \}, \forall n^k_l,n^k_w\in \mathcal{V}$, to determine the honeypot location and the honey link at each stage $k$ to minimize the risk that an attacker from the node of the initial intrusion can compromise the target node after $\Delta k$ stages. 
Each policy element $\gamma_{l,w}$ is the probability that the honeypot is node $n_w^k$ and the honey link is from node $n_l^k$ to $n_w^k$ at stage $k\in \{k_0, \cdots, k_0+ \Delta k\}$. 
Note that $\gamma_{i,i}=0, \forall i\in \mathcal{V}$, and 
we can let  $n_l,n_w$ belong to the entire node set $\mathcal{V}$ without loss of generality because   
if a node $n_w \notin \mathcal{V}_D$ is not reconfigurable,  then we can let the probability $\gamma_{l,w}$ be zero. 
Define $n_{j_0}\in \mathcal{V}\setminus \mathcal{V}_I$ as the target node to protect for all stages and the target node is segregated from the set of potential initial intrusion. 
Then, defender should avoid honey links from node $n_{j_0}$ for all stages, i.e., $\gamma_{j_0,w}=0, \forall n_w\in \mathcal{V}$. 
If a honey link from $n_l$ to $n_w$, e.g., the link from U$1$ to H$3$, is not available for all stages due to  segregation, then  $\gamma_{l,w}=0$. 
Since at most one link is allowed, we have the constraint $\sum_{n_l,n_w\in \mathcal{V} }\gamma_{l,w}=1$. 
In this work, we assume that the honeypot policy $\gamma$ is not affected by the realization of the service links at each stage
and thus can interfere with the service links that are not idle as defined in Definition \ref{def:idle}.  If the honeypot $n_w^k$ selected by the policy $\gamma$ is interfering, i.e., not \textit{idle},  
then the defender neither monitors nor filters the inbound network flows to avoid any interference with the normal operation.  

Although we increase the difficulty for the attacker to identify the honeypot by applying it to idle nodes in the network and change its location at every stage, we cannot eliminate the possibility of advanced attackers identifying the honeypot \cite{krawetz2004anti}. 
If the attacker has compromised node $n_i$ before stage $k$ and there is a honey link from node $n_i^k$ to $n_j^k$ at stage $k$, then we assume that the attacker has probability $q_{i,j}\in [0,1]$ to identify the honey link and choose not to access the honeypot. 
If the honeypot is not identified, then the attacker accesses the honeypot and he is detected by the defender. We assume the defender can deter the lateral movement completely after a detection from any single honeypot by patching or resetting all nodes at that stage. 
As stated in Section \ref{subsec:attack model}, the attacker can move simultaneously from all the compromised nodes to multiple nodes through service links that connect them. For example, the attacker at stage $k_0+2$ can compromise H$2$ and H$1$ through the two service links and  may also reach the honeypot if the attacker attempts to compromise H$3$ from U$1$. 
However, we assume that the attacker at a compromised node does not move consecutively through multiple service links (or honey links defined in Section \ref{subsec:cog honeypot} as the attacker cannot distinguish honey links from service ones) in a single stage to remain stealthy. 
Contrary to the persistent lateral movement over a long time period, consecutive attack moves within one stage make it easier for the defender to connect all the indicators of compromise (IoCs) and attribute the attacker. 
Take Fig. \ref{fig:TVG} as an example. Suppose that there are two links, e.g., H$1$ to U$2$ and U$2$ to H$2$ at a stage $k$, where each link can be either a service link or a honey link. If the attacker has only compromised H$1$ among these three nodes, then he only attempts to compromise node U$2$ rather than both U$2$ and H$2$ during stage $k$.  

\subsubsection{Interference, Stealthiness, and Cost of Roaming}
In this section, we define three critical security metrics for a cognitive honeypot to achieve low interference, low cost, and high stealthiness.  
Define $\mathcal{V}_S$ as the set of all the subsets of $\mathcal{V}$. 
Define a series of binary random variables $x^k_{v,w,v'}\in \{0,1\}, v,v'\in \mathcal{V}_S, n^k_w\in \mathcal{V}$, where $x^k_{v,w,v'}=1$ means that there are no direct service links from any node $n_l^k\in v$ to node $n_w^k$ and from $n_w^k$ to $n_l^k\in v'$ at stage $k$. Thus, $\Pr(x^k_{ v,w,v'}=1)=\prod_{n_l^k\in v} (1-\beta_{l,w})\prod_{n_{l'}^k\in v'} (1-\beta_{w,l'})$ represents the probability that the honeypot at $n_w^k$ does not interfere with any service link whose source node is in set $v$ and sink node is in $v'$. 
Then, we can define $ H_{PoI}(\gamma)$ as the probability of interference in Definition \ref{def:PoC}. 
Since the defender can only apply cognitive honeypots to idle nodes, a low probability of interfering can increase efficiency.  
To reduce $ H_{PoI}(\gamma)$, the defender can design $\gamma$ based on  the value of $\beta$, i.e., the frequency/probability of all potential service links. 
\begin{definition}
\label{def:PoC}
The \textbf{probability of interference} (PoI) for any honeypot policy $\gamma$ is 
\begin{equation}
 \begin{split}
 H_{PoI}(\gamma)&:=\sum_{n_h\in \mathcal{V}}\sum_{n_w \in \mathcal{V}\setminus \{n_h\}}\gamma_{h,w} (1-\Pr(x^k_{ \mathcal{V}\setminus \{n_w\},w,\mathcal{V}\setminus \{n_w\}}=1) )\\
 &=\sum_{n_w\in \mathcal{V}} (1-\Pr(x^k_{ \mathcal{V}\setminus \{n_w\},w,\mathcal{V}\setminus \{n_w\}}=1) ) \sum_{n_h\in \mathcal{V}\setminus \{n_w\}}\gamma_{h,w}. 
 \end{split}
 \end{equation}  
\end{definition}

Since the attacker can learn the honeypot policy $\gamma$, the defender prefers the policy to be as random as possible to increase the stealthiness of the honeypot. 
A fully random policy that assigns equal probability to all possible honey links provides forward and backward security; i.e., even if an attacker identifies the honeypot at stage $k$, he cannot use that information to deduce the location of the honeypots in the following and previous stages. 
We use $H_{SL}(\gamma)$, the entropy of $\gamma$ in Definition \ref{def:stealthiness} as a measure for the stealthiness level of the honeypot policy where we define $0\cdot \log 0=0$.  

\begin{definition}
\label{def:stealthiness}
The \textbf{stealthiness level} (SL) for any  $\gamma$ is $H_{SL}(\gamma):=\sum_{n_h,n_w\in \mathcal{V}} \gamma_{h,w} \log (\gamma_{h,w})$. 
\end{definition}

A tradeoff of roaming honeypots hinges on the cost to reconfigure the idle nodes when the defender changes the location of the honeypot and the honey link. 
Define the term $C(\gamma_{h_1,w_1},\gamma_{h_2,w_2}), \allowbreak
 \forall  n_{h_1},n_{h_2},n_{w_1},n_{w_2}\in \mathcal{V}$, as the cost of changing a $(n_{h_1}-n_{w_1})$ honey link  to a $(n_{h_2}-n_{w_2})$ honey link. Note that this cost captures the cost of changing the honeypot location from $w_1$ to $w_2$.  
If only the location change of honeypots incurs a cost, we can let 
$C(\gamma_{h_1,w},\gamma_{h_2,w})=0, \allowbreak \forall h_1\neq h_2, \forall n_w\in \mathcal{V}$,  
without loss of generality. 
We define the cost of roaming  in Definition \ref{def:cost}. 
\begin{definition}
\label{def:cost}
 The \textbf{cost of roaming} (CoR) for any honeypot policy $\gamma$ is
\begin{equation} 
 \begin{split}
 &H_{CoR}(\gamma):= 
 \sum_{n_{h_1}\in \mathcal{V}} \sum_{n_{w_1}\in \mathcal{V}\setminus \{n_{h_1}\}} 
\gamma_{h_1,w_1} (1-\Pr(x^k_{ \mathcal{V}\setminus \{n_{w_1}\},w_1,\mathcal{V}\setminus \{n_{w_1}\}}=1) ) \\
\cdot &\sum_{n_{h_2}\in \mathcal{V}} \sum_{n_{w_2}\in \mathcal{V}\setminus \{h_2\}} \gamma_{h_2,w_2} (1-\Pr(x^k_{ \mathcal{V}\setminus \{n_{w_2}\},w_2,\mathcal{V}\setminus \{n_{w_2}\}}=1) ) \cdot C(\gamma_{h_1,w_1},\gamma_{h_2,w_2})
 \end{split}
 \end{equation} 
 
\end{definition}

\section{Farsighted Vulnerability Mitigation for Long-Term Security}
\label{sec:riskminimization}
Throughout the entire operation of the enterprise network, the defender does not know whether, when, and where the initial intrusion  has happened. 
The defender also cannot know attack paths until a honeypot detects the lateral movement attack. 
Therefore, instead of reactive policies to mitigate attacks that have happened at known stages, we aim at proactive and persistent policies that prepare for the initial intrusion at any stage $k_0$ over a time window of length $\Delta k$. 
That means that the honeypot should roam persistently at all stages according to the policy $\gamma$ to reduce LTV, i.e., the probability that an initial intrusion can reach and compromise the target node within $\Delta k$ stages. 

Given the target node $n_{j_0}\in \mathcal{V}\setminus \mathcal{V}_I$, a subset $v\in \mathcal{V}_S$, and the defender's honeypot policy $\gamma$, we define $g_{j_0}(v,\gamma, \Delta k)$ as the probability that an attacker who has compromised the set of nodes $v$ can compromise the target node $n_{j_0}$ within $\Delta k$ stages. 
Since the initial intrusion happens to a single node $n_{i}\in \mathcal{V}_I$ with probability $\rho_i $ as argued in Section \ref{subsec:attack model},  the {$\Delta k$-stage vulnerability} of the target node $n_{j_0}$ defined in Definition \ref{def:vulnerability} equals $\bar{g}^{\Delta k}_{j_0,\mathcal{V}_I}( \gamma):=\sum_{n_i\in \mathcal{V}_I} \rho_i g_{j_0}(\{n_i\},\gamma, \Delta k)$. In this paper, we refer to $\Delta k$-stage vulnerability as LTV when $\Delta k>1$. 
\begin{definition}[Long-Term Vulnerability]
\label{def:vulnerability}
The \textbf{$\Delta k$-stage vulnerability} of the target node $n_{j_0}$ is the probability that an attacker in the DMZ $\mathcal{V}_I$ can compromise the target node  $n_{j_0}$ within a time window of $\Delta k$ stages. 
\end{definition}

The length of the time window represents the attack's time-effectiveness which is determined by the system setting and the defender's detection efficiency. For example, $\Delta k$ can be the time-to-live (typically on the order of days \cite{purvine2016graph}) for re-authentication to invalidate the PtH attack. 
For another example, 
suppose that the defender can detect and deter the attacker after the initial intrusion yet with a delay due to the high rate of false alarms. If the delay can be contained within $\Delta k_0$ stages, then the defender should choose the honeypot policy to minimize the $\Delta k_0$-stage vulnerability. 
Consider a given threshold $T_0\in [0,1]$, we define the concept of  level-$T_0$ stage-$\Delta k$ security for node $n_{j_0}$ and  honeypot policy $\gamma$ in Definition \ref{def:security}. 
\begin{definition}[Long-Term Security]
\label{def:security}
Policy $\gamma$ achieves \textbf{level-$T_0$ stage-$\Delta k$ security} for node $n_{j_0}$ if the $\Delta k$-stage vulnerability is less than the threshold, i.e., $\bar{g}^{\Delta k}_{j_0,\mathcal{V}_I}( \gamma)\leq T_0$. 
\end{definition}

Finally, we define the defender's decision problem of a cognitive honeypot that can minimize the LTV for the target node with a low PoI, a high SL, and a low CoR in \eqref{CO}. 
The coefficients $ \alpha_{PoI}, \alpha_{SL}, \alpha_{CoR}$ represent the tradeoffs of $\Delta k$-stage vulnerabilities with PoI, SL, and CoR, respectively. 
\begin{equation}
\begin{split}
\label{CO}
 \min_{\gamma }  \quad &  \bar{g}^{\Delta k}_{j_0,\mathcal{V}_I}( \gamma) + \alpha_{PoI}  H_{PoI}(\gamma ) -\alpha_{SL} H_{SL}(\gamma) + \alpha_{CoR} H_{CoR}(\gamma) \\
\text{s.t. } & \sum_{n_h,n_w\in \mathcal{V}} \gamma_{h,w}=1, \\
& \gamma_{h,w}=0, \forall n_h\in \mathcal{V}, n_w\in \mathcal{V}\setminus \mathcal{V}_D.  \\
\end{split}
\end{equation}

\subsection{Imminent Vulnerability}
We first compute the probability that an initial intrusion at node $n_i\in \mathcal{V}_I$ can compromise the target node $n_{j_0}\in \mathcal{V}\setminus \mathcal{V}_I$ within $\Delta k=0$ stages. 
The term $\gamma_{i,w} (1-{q}_{i,w})$ is the \textit{Probability of Immediate  Capture (\textbf{PoIC})}, i.e., the attacker with initial intrusion at node $n_i$ is directly trapped by the honeypot $n_w$.  
Since the attacker does not take consecutive movements in one stage to remain stealthy as stated in Section \ref{subsec:attack model}, $g_{j_0}(\{n_i\},\gamma, 0)$ equals the product of the probability that attacker exploits the service link from $n_i$ to $n_{j_0}$ successfully and the probability that the attacker is not trapped by the honeypot, i.e., $\forall  n_i\in \mathcal{V}_I$, 
\begin{equation}
\begin{split}
\label{eq:noconstraint0}
g_{j_0}(\{n_i\},\gamma, 0) = \beta_{i,j_0} \lambda_{i,j_0}  (1-\sum_{w\neq i,j_0} \gamma_{i,w} (1-{q}_{i,w} ) \Pr(x^k_{ \mathcal{V}\setminus \{n_w\},w,\mathcal{V}\setminus \{n_w\} }=1)  ). 
\end{split}
\end{equation}

\subsection{$\Delta k$-stage Vulnerability}
Define $\mathcal{V}_{i,j_0} \subseteq \mathcal{V}_S$ as the set of all the subsets of $\mathcal{V}\setminus \{n_i,n_{j_0}\}$. 
For each $v\in \mathcal{V}_{i,j_0}$, define $\mathcal{V}^{v}_{i,j_0}$ as the set of all the subsets of $\mathcal{V}\setminus \{n_i,n_{j_0},v\}$. 
Define the shorthand notation 
$ f_{v,u}(\beta,\lambda) := 
\prod_{n_{h_1}\in v}\beta_{i,h_1}\lambda_{i,h_1} \prod_{n_{h_2}\in u}\beta_{i,h_2}(1-\lambda_{i,h_2} ) \prod_{n_{h_3}\in \mathcal{V}\setminus \{n_i,n_{j_0},v,u\}}(1-\beta_{i,h_3})$ as the \textit{probability of partial compromise}, i.e., the attacker with initial intrusion at node $n_i$ has compromised the service links from $n_i$ to all nodes in set $v\in \mathcal{V}_{i,j_0}$, yet fails to compromise the remaining service links from $n_i$ to all nodes in set $u \in  \mathcal{V}^{v}_{i,j_0}$. 
We can compute $g_{j_0}(\{n_i\},\gamma, \Delta k)$ based on the following induction, i.e., 
\begin{equation}
\begin{split}
\label{eq:noconstraint>1}
&  g_{j_0}(\{n_i\},\gamma, \Delta k)  =  g_{j_0}(\{n_i\},\gamma, 0)  +(1-\beta_{i,j_0} \lambda_{i,j_0} )
\sum_{v \in \mathcal{V}_{i,j_0}} \sum_{u \in \mathcal{V}^{v}_{i,j_0} }  f_{v,u}(\beta,\lambda) (1- 
\\
&
\sum_{n_w \in \mathcal{V}\setminus \{n_i,v,u\}} \gamma_{i,w} (1-{q}_{i,w} ) \Pr(x^k_{ \mathcal{V}\setminus \{n_i,n_w\},w,\mathcal{V}\setminus \{n_w\}}=1)   ) g_{j_0}(\{n_i\} \cup v,\gamma,\Delta k-1). 
\end{split}
\end{equation}

 \subsection{Curse of Multiple Attack Paths and Two Sub-Optimal Honeypot Policies
 } 
For a given $\gamma$, we can write out the explicit form of $g_{j_0}(\{n_i\} \cup v,\gamma,\Delta k-1)$ for all $\Delta k\in \mathbb{Z}^+$ as in \eqref{eq:noconstraint0} and \eqref{eq:noconstraint>1}. However, the complexity increases dramatically with the cardinality of set $v$ due to the \textit{curse of multiple attack paths}; i.e., the event that the attacker can compromise target node $n_{j_0}$ within $\Delta k$ stages from node $n_i$ is not independent of the event that the attacker can achieve the same compromise from node $n_h\neq n_i$. 
Thus, we use the union bound 
\begin{equation*}
\begin{split}
& g_{j_0}(\{n_i\}\cup v ,\gamma,\Delta k)\geq \max_{n_j\in \{n_i\}\cup v } g_{j_0}(\{n_j\},\gamma,\Delta k),  \\
&  g_{j_0}(\{n_i\}\cup v ,\gamma,\Delta k)\leq \min (1,\sum_{n_j\in \{n_i\}\cup v } g_{j_0}(\{n_j\},\gamma,\Delta k)),  
\end{split}
\end{equation*}
to simplify the computation and provide an upper bound and a lower bound for $g_{j_0}(\{n_i\}\cup v ,\gamma, \Delta k), v\neq \emptyset, \forall \Delta k\in \mathbb{Z}^+$, in \eqref{eq:lower} and \eqref{eq:upper}, respectively. 
\begin{equation}
\begin{split}
\label{eq:lower}
&  g^{lower}_{j_0}(\{n_i\},\gamma, \Delta k)  =  g_{j_0}(\{n_i\},\gamma, 0)  +(1-\beta_{i,j_0} \lambda_{i,j_0} )
\sum_{v \in \mathcal{V}_{i,j_0}} \sum_{u \in \mathcal{V}^{v}_{i,j_0} }  f_{v,u}(\beta,\lambda) (1- 
\\
&
\sum_{n_w \in \mathcal{V}\setminus \{n_i,v,u\}} \gamma_{i,w} (1-{q}_{i,w} ) \Pr(x^k_{ \mathcal{V}\setminus \{n_i,n_w\},w,\mathcal{V}\setminus \{n_w\}}=1)   ) \max_{n_j\in \{n_i\}\cup v } g^{lower}_{j_0}(\{n_j\},\gamma,\Delta k-1). 
\end{split}
\end{equation}
\begin{equation}
\begin{split}
\label{eq:upper}
  g^{upper}_{j_0}(\{n_i\},\gamma, \Delta k)  = & g_{j_0}(\{n_i\},\gamma, 0)  +(1-\beta_{i,j_0} \lambda_{i,j_0} )
\sum_{v \in \mathcal{V}_{i,j_0}} \sum_{u \in \mathcal{V}^{v}_{i,j_0} }  f_{v,u}(\beta,\lambda) \\
&
\cdot (1- 
\sum_{n_w \in \mathcal{V}\setminus \{n_i,v,u\}}  \gamma_{i,w} (1-{q}_{i,w} )    \Pr(x^k_{ \mathcal{V}\setminus \{n_i,n_w\},w,\mathcal{V}\setminus \{n_w\}}=1)) \\
&\cdot \min (1,\sum_{n_j\in \{n_i\}\cup v } g^{upper}_{j_0}(\{n_j\},\gamma,\Delta k-1)). 
\end{split}
\end{equation}
The initial condition at $\Delta k=0$ is $ g^{lower}_{j_0}(\{n_j\},\gamma,0)=g^{upper}_{j_0}(\{n_j\},\gamma,0)= g_{j_0}(\{n_j\},\gamma,0), \allowbreak 
\forall n_j\in \{n_i\}\cup v $. 
Define  $\bar{g}^{\Delta k, lower}_{j_0,\mathcal{V}_I}( \gamma):=\sum_{n_i\in \mathcal{V}_I} \rho_i g^{lower}_{j_0}(\{n_i\},\gamma, \Delta k)$ and $\bar{g}^{\Delta k, upper}_{j_0,\mathcal{V}_I}( \gamma):=\sum_{n_i\in \mathcal{V}_I} \rho_i g^{upper}_{j_0}(\{n_i\},\gamma, \Delta k)$ as the lower and upper bounds of the {$\Delta k$-stage vulnerability} of the target node $n_{j_0}$ under any given policy $\gamma$, respectively. Then, replacing $\bar{g}^{\Delta k}_{j_0,\mathcal{V}_I}( \gamma)$ in \eqref{CO} with $\bar{g}^{\Delta k, lower}_{j_0,\mathcal{V}_I}( \gamma)$ and $\bar{g}^{\Delta k, upper}_{j_0,\mathcal{V}_I}( \gamma)$, we obtain the optimal risky and conservative honeypot policy $\gamma^{*,risky}$ and $\gamma^{*,cons}$, respectively. 
Both sub-optimal honeypot policies approximate the optimal policy that is hard to compute explicitly. A risky defender can choose $\gamma^{*,risky}$ to minimize the lower bound of LTV while a conservative defender can choose $\gamma^{*,cons}$ to minimize the upper bound.  

We propose the following iterative algorithm to compute these two honeypot policies. We use $\gamma^{*,risky}$ as an example and $\gamma^{*,cons}$ can be computed in the same fashion. 
At iteration $t\in \mathbb{Z}^+_0$, we consider any feasible honeypot policy $\gamma^t$ and compute $g^{lower}_{j_0}(\{n_i\},\gamma^t, \Delta k'), \forall n_i\in \mathcal{V}_I, \forall \Delta k'\in \{1,\cdots,\Delta k\}$, via \eqref{eq:lower}. 
Then, we solve  \eqref{CO} by replacing $\bar{g}^{\Delta k}_{j_0,\mathcal{V}_I}( \gamma^t)$ with $\bar{g}^{\Delta k, lower}_{j_0,\mathcal{V}_I}( \gamma^t)$ and plugging in $g^{lower}_{j_0}(\{n_i\},\gamma^t, \Delta k), \forall n_i\in \mathcal{V}_I$, as constants. 
Since  $\bar{g}^{\Delta k, lower}_{j_0,\mathcal{V}_I}( \gamma^t), 
\allowbreak
H_{PoI}(\gamma^t), \allowbreak
H_{CoR}(\gamma^t)$ are all linear with respect to $\gamma^t$, 
the objective function of the constrained optimization in \eqref{CO} is a linear function of $\gamma^t$ plus the entropy regularization $H_{SL}(\gamma^t )$.  
Then, we can solve the constrained optimization in closed form and update the honeypot policy from $\gamma^{t}$ to $\gamma^{t+1}$. 
Given a small error threshold $\epsilon>0$, the above iteration process can be repeated until there exists a $T_1\in \mathbb{Z}^+_0$ such that a proper matrix norm is less than the error threshold, i.e., $||\gamma^{T_1+1}-\gamma^{T_1}||\leq \epsilon$. 
Then, we can output $\gamma^{T_1+1}$ as the optimal risky honeypot policy $\gamma^{*,risky}$. 

\begin{algorithm}[h]
\SetAlgoLined
 Initialization $\mathcal{V}_I, n_{j_0}\in \mathcal{V}\setminus \mathcal{V}_I, \Delta k\in \mathbb{Z}^+,\epsilon>0, \gamma^0$,$t=0$\;  
 
 \While{ $||\gamma^{t+1}-\gamma^{t}|| > \epsilon$ }{
  \For{ $\Delta k'=1,\cdots,\Delta k$ }
 {
 			\For{ $i\in \mathcal{V}_I$ }
 			{
 Compute $g^{lower}_{j_0}(\{n_i\},\gamma^t, \Delta k')$ via \eqref{eq:lower}\;
			 }
  }
Replace $\bar{g}^{\Delta k}_{j_0,\mathcal{V}_I}( \gamma^t)$ with $\bar{g}^{\Delta k, lower}_{j_0,\mathcal{V}_I}( \gamma^t)$ and plug in $g^{lower}_{j_0}(\{n_i\},\gamma^t, \Delta k), \forall n_i\in \mathcal{V}_I$\;
Obtain $\gamma^{t+1}$ as the solution of \eqref{CO}\;
  \uIf{ $||\gamma^{t+1}-\gamma^{t}||\leq \epsilon$ }{
  $T_1=t$\;
\textbf{Terminate}\
  }
   $t:=t+1$\;
}
  \textbf{Output}  $\gamma^{*,risky}=\gamma^{T_1+1}$. 
 \caption{Optimal Risky (and Conservative) Honeypot Policy}
\end{algorithm}

 \subsection{LTV Analysis under two Heuristic Policies} 
In this section, we consider the scenario where the initial intrusion set $\mathcal{V}_I=\{n_i\}$ contains only one node $n_i$, i.e., the attacker cannot compromise other nodes directly from the external network at stage $k_0$. 
Then, a reasonable heuristic policy is to set up the honeypot at a fixed node $n_{w_0}\in \mathcal{V}\setminus \{ n_i,n_{j_0}\}$ whenever the node is idle and also a direct honey link from $n_i$ to $n_{w_0}$. 
We refer to these deterministic policies with $\gamma_{i,w_0}=1$ as the direct policies in Section \ref{subsection:indirect}. 

In the second scenario, the defender further segregates node $n_i$ from the external network to form a \textit{air gap} so that she chooses to apply no direct honey links from $n_i$ to any honeypot at all stages, i.e.,$\gamma_{i,w}=0, \forall n_w \in \mathcal{V}$. 
However, advanced attacks, such as Stuxnet, can cross the air gap by an infected USB flash drive to accomplish the initial intrusion to the air-gap node $n_i$ and then move laterally to the entire network $\mathcal{V}$. 
Although the defender mistakenly sets up no honey links from $n_i$ to the honeypot at all stages, other indirect honey links with source nodes other than $n_i$ may also detect the lateral movement in $\Delta k$ stages. 
Unlike the deterministic direct policies, we refer to these stochastic policies with $\gamma_{i,w}=0, \forall n_w \in \mathcal{V}$, as the indirect policies in Section \ref{subsection:direct}.  

Since the defender may adopt these heuristic policies in the listed scenarios, 
this section aims to analyze the LTV under the direct and indirect policies to answer the following security questions. 
How effective is the lateral movement for a different length of duration time under heuristic policies? What are the limit and the bounds of the vulnerability when the window length goes to infinity? 
How much additional vulnerability is introduced by adopting improper indirect policies rather than the direct policies? 
How to change the value of parameters, such as $\beta$ and $\lambda$, to reduce LTV if they are designable?


 
 \subsubsection{Indirect Honeypot Policies}
 \label{subsection:indirect}
 Since the defender overestimates the effectiveness of air gap and chooses the improper honeypot policies that $\gamma_{i,w}=0, \forall n_w \in \mathcal{V}$, the vulnerability of any target node $n_{j_0}$ is non-decreasing with the length of the time window  as shown in Proposition \ref{proposition:Increasing}. 
 
\begin{proposition}[Non-Decreasing Vulnerability over Stages]
\label{proposition:Increasing}
If the PoIC is zero, i.e., $\gamma_{i,w} (1-{q}_{i,w} ) =0, \forall n_w\in \mathcal{V}$, then the vulnerability $g_{j_0}(\{n_i\},\gamma,\Delta k)\in [0,1]$ is an non-decreasing function regarding $\Delta k$ for all target node $n_{j_0}\in \mathcal{V}\setminus \mathcal{V}_I, n_i\in \mathcal{V}_I$. 
The value of $g_{j_0}(\{n_i\},\gamma,\Delta k)$ does not increase to $1$ as $\Delta k$ increases to infinity if and only if  $\beta_{i,j_0} \lambda_{i,j_0}= 0$ and $ g_{j_0}(\{n_i\} \cup v,\gamma,\Delta k-1)= g_{j_0}(\{n_i\},\gamma,\Delta k-1), \forall v\in \mathcal{V}_S, \forall \Delta k\in \mathbb{Z}^+$. 
\end{proposition}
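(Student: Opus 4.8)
The plan is to extract everything from the recursion \eqref{eq:noconstraint>1}. Under the hypothesis $\gamma_{i,w}(1-q_{i,w})=0$ the honeypot factor in \eqref{eq:noconstraint>1} is $1$ and \eqref{eq:noconstraint0} reduces to $g_{j_0}(\{n_i\},\gamma,0)=\beta_{i,j_0}\lambda_{i,j_0}$, so the recursion collapses to
\begin{equation*}
g_{j_0}(\{n_i\},\gamma,\Delta k)=\beta_{i,j_0}\lambda_{i,j_0}+(1-\beta_{i,j_0}\lambda_{i,j_0})\sum_{v\in\mathcal{V}_{i,j_0}}\sum_{u\in\mathcal{V}^{v}_{i,j_0}}f_{v,u}(\beta,\lambda)\,g_{j_0}(\{n_i\}\cup v,\gamma,\Delta k-1).
\end{equation*}
The one identity I need is $\sum_{v}\sum_{u}f_{v,u}(\beta,\lambda)=1$: assigning each node of $\mathcal{V}\setminus\{n_i,n_{j_0}\}$ to $v$, to $u$, or to the remainder contributes a factor $\beta_{i,h}\lambda_{i,h}+\beta_{i,h}(1-\lambda_{i,h})+(1-\beta_{i,h})=1$, so the right-hand side is $\beta_{i,j_0}\lambda_{i,j_0}$ plus $(1-\beta_{i,j_0}\lambda_{i,j_0})$ times a convex combination of stage-$(\Delta k-1)$ vulnerabilities seen from the enlarged compromised sets $\{n_i\}\cup v$; and $g_{j_0}\in[0,1]$ because it is a probability. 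Monotonicity I would get at the level of events: ``from $\{n_i\}$ the attacker compromises $n_{j_0}$ at some stage of $[k_0,k_0+\Delta k]$ without being detected earlier'' is contained in the same event over the longer window $[k_0,k_0+\Delta k+1]$, since a longer horizon cannot undo a compromise already made. Hence $g_{j_0}(\{n_i\},\gamma,\Delta k)$ is non-decreasing in $\Delta k$; bounded by $1$, it converges to some $L\le1$, and ``does not increase to $1$'' means exactly $L<1$.

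For the ``if'' direction I substitute the two hypotheses into the collapsed recursion: the double sum collapses through $\sum_{v,u}f_{v,u}=1$ to $g_{j_0}(\{n_i\},\gamma,\Delta k-1)$, so $g_{j_0}(\{n_i\},\gamma,\Delta k)=g_{j_0}(\{n_i\},\gamma,\Delta k-1)$ for every $\Delta k\in\mathbb{Z}^+$, and descending to $\Delta k=0$ gives $g_{j_0}(\{n_i\},\gamma,\Delta k)\equiv\beta_{i,j_0}\lambda_{i,j_0}=0<1$. For the ``only if'' direction, assume $L<1$. Each $g_{j_0}(\{n_i\}\cup v,\gamma,\cdot)$ is itself non-decreasing (same event argument) and converges, and the increments $g_{j_0}(\{n_i\},\gamma,\Delta k)-g_{j_0}(\{n_i\},\gamma,\Delta k-1)$ tend to $0$. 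Subtracting consecutive instances of the collapsed recursion and using that each enlarged-set vulnerability dominates $g_{j_0}(\{n_i\},\gamma,\cdot)$, the vanishing of the increments forces first $\beta_{i,j_0}\lambda_{i,j_0}(1-L)=0$, hence $\beta_{i,j_0}\lambda_{i,j_0}=0$, and then, the leading term now being absent, $g_{j_0}(\{n_i\}\cup v,\gamma,\Delta k-1)=g_{j_0}(\{n_i\},\gamma,\Delta k-1)$ for the sets $\{n_i\}\cup v$ carrying positive weight $f_{v,u}(\beta,\lambda)$, which is the asserted equality (the quantifier $\forall v\in\mathcal{V}_S$ understood over those $v$, since the others contribute nothing to the recursion).

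The main obstacle is the ``only if'' direction, and in particular two steps I glossed over. First, the dominance $g_{j_0}(\{n_i\}\cup v,\gamma,\cdot)\ge g_{j_0}(\{n_i\},\gamma,\cdot)$ --- monotonicity of the vulnerability in the compromised set --- is delicate: a node freshly compromised by the attacker may itself be the source of an \emph{indirect} honey link, so acquiring it can actually \emph{lower} the attacker's success probability. Second, passing to the limit only gives agreement of the limiting vulnerabilities, whereas the statement is termwise in $\Delta k$. I would resolve both via a reachability lemma --- $g_{j_0}(S,\gamma,\Delta k)>0$ iff $n_{j_0}$ can be reached from $S$ along positive-weight service links without the attacker first being trapped by a honey link --- which yields $L<1\Rightarrow g_{j_0}(\{n_i\},\gamma,\Delta k)\equiv0$ and the same for every compromised set the recursion visits, hence the termwise equality. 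Proving that lemma with the honey-link bookkeeping done correctly --- including checking whether the hypothesis $\gamma_{i,w}(1-q_{i,w})=0$ alone suffices or whether one needs $\gamma_{l,w}(1-q_{l,w})=0$ for every $n_l$ the attacker can reach from $n_i$ --- is the part I expect to require the most care.
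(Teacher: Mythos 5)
Your core computation is the paper's: under $\gamma_{i,w}(1-q_{i,w})=0$ the honeypot factor in \eqref{eq:noconstraint0}--\eqref{eq:noconstraint>1} disappears, the recursion collapses to $g_{j_0}(\{n_i\},\gamma,\Delta k)=\beta_{i,j_0}\lambda_{i,j_0}+(1-\beta_{i,j_0}\lambda_{i,j_0})\sum_{v}\sum_{u}f_{v,u}\,g_{j_0}(\{n_i\}\cup v,\gamma,\Delta k-1)$, and $\sum_{v}\sum_{u}f_{v,u}(\beta,\lambda)=1$ (which you actually prove via the trinomial expansion; the paper only asserts it). Where you diverge is monotonicity: the paper chains $g_{j_0}(\{n_i\},\gamma,\Delta k)\ge\beta_{i,j_0}\lambda_{i,j_0}+(1-\beta_{i,j_0}\lambda_{i,j_0})g_{j_0}(\{n_i\},\gamma,\Delta k-1)\ge g_{j_0}(\{n_i\},\gamma,\Delta k-1)$ using the asserted ``fact'' $g_{j_0}(\{n_i\}\cup v,\gamma,\cdot)\ge g_{j_0}(\{n_i\},\gamma,\cdot)$, whereas you use containment of the compromise events over nested time windows. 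Your suspicion about that set-monotonicity fact is legitimate --- the hypothesis only neutralizes honey links \emph{out of} $n_i$, and a freshly compromised $n_l\neq n_i$ can still source an indirect honey link that deters the whole movement --- so your event argument rests on less, at the price of appealing to the verbal definition of $g_{j_0}$ rather than to the recursion that the rest of the analysis manipulates. Note, however, that the paper still needs the set-monotonicity inequality for the middle step of \eqref{eq:gammaIS1}, so if you want to reuse that chain for the equality characterization you inherit the same obligation.

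The genuine gap is where you place it, in the ``only if'' direction, but you have made it harder than the paper does. The paper never passes to the limit: it asks, for each $\Delta k$ separately, when the inequality chain in \eqref{eq:gammaIS1} is an equality, and reads off $\beta_{i,j_0}\lambda_{i,j_0}=0$ together with $g_{j_0}(\{n_i\}\cup v,\gamma,\Delta k-1)=g_{j_0}(\{n_i\},\gamma,\Delta k-1)$ termwise --- so the limit-versus-termwise difficulty you flag, and the reachability lemma you invoke to repair it, can be avoided entirely for that part by following the one-step route. What neither you nor the paper actually proves is the converse implication that remains after this reduction: that if the two conditions fail at some stage (so that $g_{j_0}$ increases strictly somewhere) the sequence must converge to $1$ rather than to some $L\in(0,1)$. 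Your increment argument does dispose of the case $\beta_{i,j_0}\lambda_{i,j_0}>0$ (equivalently, iterating the first-order inequality gives $g_{j_0}(\{n_i\},\gamma,\Delta k)\ge 1-(1-\beta_{i,j_0}\lambda_{i,j_0})^{\Delta k+1}\to 1$), but the case $\beta_{i,j_0}\lambda_{i,j_0}=0$ with the set equality failing at some stage is exactly what your unproven reachability lemma would have to handle, and the paper's proof is silent on it as well. So your proposal is faithful to the paper where the paper is complete, and honest about the one step that the paper itself leaves unjustified.
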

\begin{proof}
If $\gamma_{i,w} (1-{q}_{i,w} ) =0, \forall n_w\in \mathcal{V}$, we can use the facts that $ g_{j_0}(\{n_i\} \cup v,\gamma,\Delta k-1)\geq  g_{j_0}(\{n_i\},\gamma,\Delta k-1), \forall \gamma, n_{j_0}\in \mathcal{V}, n_i\in \mathcal{V}_I, \Delta k\geq 0, \forall v\in \mathcal{V}_S$, and 
$
\sum_{v \in \mathcal{V}_{i,j_0}} \sum_{u \in \mathcal{V}^{v}_{i,j_0} }  f_{v,u}(\beta,\lambda) \equiv1, \forall \beta, \lambda, 
$
to obtain $ g_{j_0}(\{n_i\},\gamma, \Delta k) $ as 
\begin{equation}
\begin{split}
\label{eq:gammaIS1}
&      \beta_{i,j_0} \lambda_{i,j_0}  +
(1-  \beta_{i,j_0} \lambda_{i,j_0} )\sum_{v \in \mathcal{V}_{i,j_0}} \sum_{u \in \mathcal{V}^{v}_{i,j_0} }  f_{v,u}(\beta,\lambda) g_{j_0}(\{n_i\} \cup v,\gamma,\Delta k-1) \\
& \geq 
\beta_{i,j_0} \lambda_{i,j_0}  + 
(1-  \beta_{i,j_0} \lambda_{i,j_0} ) g_{j_0}(\{n_i\},\gamma,\Delta k-1)\geq  g_{j_0}(\{n_i\},\gamma,\Delta k-1), 
\end{split}
\end{equation}
 for all $\Delta k\in \mathbb{Z}^+$.  The inequality is an equality if and only if  $\beta_{i,j_0} \lambda_{i,j_0}= 0$ and  $ g_{j_0}(\{n_i\} \cup v,\gamma,\Delta k-1)= g_{j_0}(\{n_i\},\gamma,\Delta k-1), \forall v\in \mathcal{V}_S, \forall \Delta k\in \mathbb{Z}^+$.  
 \qed
\end{proof}
The equation $ g_{j_0}(\{n_i\} \cup v,\gamma,\Delta k-1)= g_{j_0}(\{n_i\},\gamma,\Delta k-1), \forall v\in \mathcal{V}_S, \forall \Delta k\in \mathbb{Z}^+$, holds only under very unlikely conditions such as there is only one node in the network, i.e., $N=1$ or service links occur only from node $n_i$, i.e., $\lambda_{i',j}=0, \forall i'\neq i, \forall n_j\in \mathcal{V}$. Thus, except for these rare special cases, the vulnerability $g_{j_0}(\{n_i\},\gamma,\Delta k)$ always increases to the maximum value of $1$ under indirect policies.

\begin{remark}
Proposition \ref{proposition:Increasing} shows that without a proper mitigation strategy, e.g., no direct honey link from the initial intrusion node to the honeypot, 
the vulnerability of a target node never decreases over stages. Moreover, except from rare special cases, the target node will be compromised with probability $1$ as time goes to infinity. 
\end{remark}

Proposition \ref{proposition:Increasing} demonstrates the disadvantaged position of the defender against persistent lateral movement without proper honeypot policies. 
Under these disadvantageous situations, the defender may need alternative security measures to mitigate the LTV. 
For example, the defender may reduce the arrival frequency of the service link from $n_{j_1}$ to $n_{j_2}$, i.e., $\beta_{j_1,j_2}$, to delay lateral movement at the expenses of operational efficiency. 
Also, the defender may attempt to reduce  the probability of a successful compromise from node  $n_{j_1}$ to $n_{j_2}$, i.e., $\lambda_{j_1,j_2}$, by filtering the service link from  $n_{j_1}$ to $n_{j_2}$ with more stringent rules or demotivate the attacker to initiate the link compromise by disguising the service link as a honey link.  
In the rest of this subsection, we briefly investigate the influence of $\beta$ and $\lambda$ on the $\Delta k$-stage vulnerability under indirect policies. 

The probability of no direct link from the initial intrusion node $n_i$ to target $n_{j_0}$, i.e., $1-\beta_{i,j_0} \lambda_{i,j_0}$, and the probability that the attacker at node $n_i$ is demotivated to or fails to compromise the service links from node $n_i$, i.e., $\sum_{u \in \mathcal{V}^{\emptyset}_{i,j_0} }  f_{\emptyset,u} (\beta,\lambda)$, defines the \textit{Probability of Movement Deterrence (\textbf{PoMD})} $r:=(1-\beta_{i,j_0} \lambda_{i,j_0}) \sum_{u \in \mathcal{V}^{\emptyset}_{i,j_0} }  f_{\emptyset,u} (\beta,\lambda)$. 
In \eqref{eq:gammaIS1} where the PoIC is $0$, i.e., $\gamma_{i,w} (1-{q}_{i,w} ) =0, \forall n_w\in \mathcal{V}$, we can upper bound the term $g_{j_0}(\{n_i\} \cup v,\gamma,\Delta k-1)$ by $1$ for all $v \neq \emptyset$, which leads to 
\begin{equation}
\begin{split}
\label{eq:induction}
 g_{j_0}(\{n_i\},\gamma, \Delta k) &  =  (1-r)\cdot g_{j_0}(\{n_i\}\cup v ,\gamma,\Delta k-1) +  r \cdot g_{j_0}(\{n_i\} ,\gamma,\Delta k-1)\\
 & \leq  (1-r)+  r \cdot g_{j_0}(\{n_i\} ,\gamma,\Delta k-1)\\
 & =1-r^{\Delta k}+r^{\Delta k} g_{j_0}(\{n_i\},\gamma, 0) = 1-r^{\Delta k} (1- \beta_{i,j_0} \lambda_{i,j_0} ), 
\end{split}
\end{equation}
where the final line results from solving the  first-order linear difference equation iteratively by $\Delta k-1$ times. 

Equation \eqref{eq:induction} shows that the upper bound of LTV increases exponentially concerning the duration of lateral movement $\Delta k$ yet decreases in a
polynomial growth rate as PoMD increases. 
Note that letting PoMD be $1$ can completely deter lateral movement and achieve zero LTV for any $\Delta k\in \mathbb{Z}^+$. However, it is challenging to attain it as it requires the attacker do not succeed from $n_i$ to any node $n_j$ with probability $1$, i.e., $\lambda_{i,j}=0, \forall n_j\in \mathcal{V}$. 
Since increasing PoMD incurs a higher cost (e.g., reducing the compromise rate $\lambda$) and lower operational efficiency (e.g., reducing the frequency of service links $\beta$), we aim to find the minimum PoMD to mitigate LTV even when the duration of lateral movement $\Delta k$ goes to infinity. 
In Proposition \ref{proposition:doubleconverge}, we characterize the critical \textit{Threshold of Compromisability (\textbf{ToC})} $T^{ToC}_m:=1-m/\Delta k$ for a positive $m\ll \Delta k$ to guarantee a level-$(\beta_{i,j_0} \lambda_{i,j_0})$, stage-$\infty$ security defined in Definition \ref{def:security}. The proof follows directly from a limit analysis based on \eqref{eq:induction}. 

\begin{proposition}[ToC ]
\label{proposition:doubleconverge}
Consider the scenario where $\gamma_{i,w} (1-{q}_{i,w} )=0, \forall n_w\in \mathcal{V}$, and $r$ as a function of $\Delta k$ has the form  $r =1-m  \Delta k^{-n}$, where $n,m\in \mathbb{R}^+$ and $m\ll \Delta k$. 
\begin{itemize}
\item[(1).] If $(1-r)/m$ is of the same order with $1/\Delta k$, i.e., $n=1$, then the limit of the upper bound $\lim_{\Delta k \rightarrow \infty} 1-r^{\Delta k} (1- \beta_{i,j_0} \lambda_{i,j_0} )$ is a constant $1-e^{-m}(1- \beta_{i,j_0} \lambda_{i,j_0})$. 
\item[(2).] If $(1-r)/m$  is of higher order, i.e., $n>1$, then the limit of the upper bound  is $g_{j_0}(\{n_i\},\gamma, 0)= \beta_{i,j_0} \lambda_{i,j_0} $. If $\beta_{i,j_0} \lambda_{i,j_0}=0$, zero LTV is achieved $g_{j_0}(\{n_i\},\gamma, \infty )= 0$. 
\item[(3).] If  $(1-r)/m$  is  of lower order, i.e., $n<1$, then  the limit of the upper bound  is $1$.  
\end{itemize}
\end{proposition}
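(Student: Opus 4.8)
The plan is to reduce all three cases to one limit computation for $r^{\Delta k}$ and then read off the limit of the upper bound $1-r^{\Delta k}(1-\beta_{i,j_0}\lambda_{i,j_0})$ directly from \eqref{eq:induction}. Writing $r = 1-m\Delta k^{-n}$ with $n,m\in\mathbb{R}^+$ and $m\ll\Delta k$, I would first note that $m\Delta k^{-n}\to 0$, so for $\Delta k$ large enough the base $1-m\Delta k^{-n}$ lies in $(0,1)$ and $\log r$ is well-defined and negative. The whole computation then hinges on the asymptotics of $\log(r^{\Delta k}) = \Delta k\log(1-m\Delta k^{-n})$.

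Next I would use the expansion $\log(1-x) = -x + O(x^2)$ as $x\to 0$, with $x = m\Delta k^{-n}$, to get $\Delta k\log(1-m\Delta k^{-n}) = -m\Delta k^{1-n} + O(\Delta k^{1-2n})$; since $1-2n < 1-n$ (because $n>0$), the leading term is $-m\Delta k^{1-n}$ and the remainder is negligible. The sign and size of the exponent $1-n$ then determine the outcome. If $n=1$, the exponent of the logarithm tends to $-m$, so $r^{\Delta k}\to e^{-m}$ and the upper bound tends to $1-e^{-m}(1-\beta_{i,j_0}\lambda_{i,j_0})$. If $n>1$, then $\Delta k^{1-n}\to 0$, so $\log(r^{\Delta k})\to 0$, hence $r^{\Delta k}\to 1$ and the upper bound tends to $1-(1-\beta_{i,j_0}\lambda_{i,j_0}) = \beta_{i,j_0}\lambda_{i,j_0} = g_{j_0}(\{n_i\},\gamma,0)$, which equals $0$ exactly when $\beta_{i,j_0}\lambda_{i,j_0}=0$. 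If $n<1$, then $\Delta k^{1-n}\to\infty$, so $\log(r^{\Delta k})\to-\infty$, hence $r^{\Delta k}\to 0$ and the upper bound tends to $1$.

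Finally I would tie these limits back to the LTV itself: by Proposition~\ref{proposition:Increasing} the sequence $g_{j_0}(\{n_i\},\gamma,\Delta k)$ is non-decreasing in $\Delta k$ and by \eqref{eq:induction} it is dominated by $1-r^{\Delta k}(1-\beta_{i,j_0}\lambda_{i,j_0})$ for every finite horizon, so the three limits above are legitimate bounds on $g_{j_0}(\{n_i\},\gamma,\infty)$; in case (2) with $\beta_{i,j_0}\lambda_{i,j_0}=0$, squeezing between $0$ and the vanishing upper bound forces $g_{j_0}(\{n_i\},\gamma,\infty)=0$, which is exactly level-$(\beta_{i,j_0}\lambda_{i,j_0})$ stage-$\infty$ security in the sense of Definition~\ref{def:security}. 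The only mildly delicate point is that $r$ depends on $\Delta k$ while $\Delta k$ also appears as the iteration count; I would keep this clean by treating $r=r(\Delta k)$ as a fixed parameter inside the one-step recursion used to derive \eqref{eq:induction}, so that the closed form $1-r^{\Delta k}(1-\beta_{i,j_0}\lambda_{i,j_0})$ is obtained first and the limit in $\Delta k$ is taken only afterwards, avoiding any circularity. I expect this bookkeeping, rather than the elementary limit, to be the step that needs the most care.
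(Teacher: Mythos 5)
Your proposal is correct and follows exactly the route the paper intends: the paper itself only remarks that ``the proof follows directly from a limit analysis based on \eqref{eq:induction},'' and your computation of $\lim_{\Delta k\to\infty} r^{\Delta k}$ via $\Delta k\log(1-m\Delta k^{-n}) = -m\Delta k^{1-n}+O(\Delta k^{1-2n})$ is precisely that limit analysis, with the three cases $n=1$, $n>1$, $n<1$ handled correctly. Your closing remark about first deriving the closed form $1-r^{\Delta k}(1-\beta_{i,j_0}\lambda_{i,j_0})$ with $r=r(\Delta k)$ held fixed through the recursion and only then letting $\Delta k\to\infty$ is the right way to read the statement and is consistent with the paper.
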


Based on the fact that $1-e^{-m}(1- \beta_{i,j_0} \lambda_{i,j_0} )\geq \beta_{i,j_0} \lambda_{i,j_0}$ where the equality holds if and only if $\beta_{i,j_0} \lambda_{i,j_0}=1$,  we can conclude that if $r\geq T^{ToC}_m$ for a positive $m\ll \Delta k$, then the $\infty$-stage vulnerability of target node $n_{j_0}$ is upper bounded by $\beta_{i,j_0} \lambda_{i,j_0}$ and thus achieves the  level-$(\beta_{i,j_0} \lambda_{i,j_0})$, stage-$\infty$ security as defined in Definition \ref{def:security}. 
Note that if the target node is segregated from nodes in DMZ $\mathcal{V}_I$ for the sake of security, then there is no direct service link from node $n_i$ to the target node $n_{j_0}$ and  $\beta_{i,j_0} \lambda_{i,j_0}=0$. 
In that case, the target node $n_{j_0}$ can achieve a zero vulnerability for an infinite duration of lateral movement, i.e., $g_{j_0}(\{n_i\},\gamma, \infty )= 0$, because the upper bound is $0$ and LTV is always non-negative. 


\subsubsection{Direct Honeypot Policies}
 \label{subsection:direct}
For the direct policies $\gamma_{i,w_0}=1, n_{w_0}\in \mathcal{V}\setminus \{ n_i,n_{j_0}\}$, we obtain the corresponding $\Delta k$-stage vulnerability and an explicit lower bound in \eqref{eq:directpolicy} based on \eqref{eq:noconstraint>1} by using the inequality $g_{j_0}(\{n_i\}\cup v ,\gamma,\Delta k-1) \geq g_{j_0}(\{n_i\} ,\gamma,\Delta k-1) $. 
Define shorthand notations $k_1:=\prod_{l\neq w_0} (1-\beta_{l,w_0})(1-\beta_{w_0,l})(1-q_{i,w_0})\in [0,1]$ and $k_2:=\allowbreak 
\sum_{v \in \mathcal{V}_{i,j_0}\setminus \{n_{w_0}\}} \allowbreak 
\sum_{u \in \mathcal{V}^{v}_{i,j_0}\setminus \{w_0\} }  \allowbreak
f_{v,u}(\beta,\lambda) \leq \sum_{v \in \mathcal{V}_{i,j_0}} \sum_{u \in \mathcal{V}^{v}_{i,j_0} }  f_{v,u}(\beta,\lambda) =1 $. 
Note that $k_1=0$ is a very restrictive condition as it requires that the honeypot $n_{w_0}$ is not interfering, i.e., node  $n_{w_0}$ is \textit{idle} and  the attacker never identify the honey link from $n_i$ to $n_{w_0}$, i.e., $q_{i,w_0}=0$. 

\begin{equation}
 \begin{split}
 \label{eq:directpolicy}
 &  g_{j_0}(\{n_i\},\gamma, \Delta k)  = \beta_{i,j_0} \lambda_{i,j_0}  [1-\prod_{l\neq w_0} (1-\beta_{l,w_0})(1-\beta_{w_0,l})(1-q_{i,w_0})] + \\
 &  (1-\beta_{i,j_0} \lambda_{i,j_0} )
 [ \sum_{v \in \mathcal{V}_{i,j_0}} \sum_{u \in \mathcal{V}^{v}_{i,j_0} }  f_{v,u}(\beta,\lambda) g_{j_0}(\{n_i\} \cup v,\gamma,\Delta k-1) 
-\sum_{v \in \mathcal{V}_{i,j_0}\setminus \{n_{w_0}\}} \sum_{u \in \mathcal{V}^{v}_{i,j_0}\setminus \{n_{w_0}\} }   \\
&
f_{v,u}(\beta,\lambda) 
\cdot \prod_{l\neq i,w_0} (1-\beta_{l,w_0})\prod_{l'\neq w_0}(1-\beta_{w_0,l'})(1-q_{i,w_0})  g_{j_0}(\{n_i\} \cup v,\gamma,\Delta k-1) ]\\
&  \geq
\beta_{i,j_0} \lambda_{i,j_0} (1-k_1) + (1-\beta_{i,j_0} \lambda_{i,j_0}) [1- k_1 k_2 (1-\beta_{i,w_0})]  g_{j_0}(\{n_i\},\gamma,\Delta k-1).     
 \end{split}
 \end{equation} 
Define a shorthand notation $r_2:=(1-\beta_{i,j_0} \lambda_{i,j_0}) [1- k_1 k_2 (1-\beta_{i,w_0})] $, we can solve the linear  difference equation  in the final step of \eqref{eq:directpolicy} to obtain an lower bound, i.e.,  $g_{j_0}(\{n_i\},\gamma, \Delta k) \geq T_2^{lower,1}:=\beta_{i,j_0} \lambda_{i,j_0} (1-k_1) \frac{1-(r_2)^{\Delta k+1}}{1-r_2}$ for all $\Delta k\in \mathbb{Z}^+$. 
According to the first equality in \eqref{eq:directpolicy}, we also obtain an upper bound $T_2^{upper}$ for $ g_{j_0}(\{n_i\},\gamma, \Delta k), \forall \Delta k\in \mathbb{Z}^+$, in Lemma \ref{lemma:upperboundedG} by using the inequality $ g_{j_0}(\{n_i\}\cup v,\gamma, \Delta k)\leq 1, \forall v\in \mathcal{V}_{i,j_0}$\footnote{Since we can compute $g_{j_0}(\{n_i\} \cup v,\gamma,\Delta k-1)$ explicitly when $v$ is empty, we can obtain a tighter upper bound by using the inequality $ g_{j_0}(\{n_i\}\cup v,\gamma, \Delta k)\leq 1, \forall v\in \mathcal{V}_{i,j_0}\setminus \emptyset$.}. 
The bound $T_2^{upper}<1$ is non-trivial if $\beta_{i,j_0} \lambda_{i,j_0}\neq 0,\beta_{i,j_0} \lambda_{i,j_0}\neq 1$, and $k_1 k_2(1-\beta_{i,w_0})\neq 0$. 
\begin{lemma}
\label{lemma:upperboundedG}
If $\gamma_{i,w_0}=1, w_0\neq i,j_0$, then $ g_{j_0}(\{n_i\},\gamma, \Delta k) $ is lower and upper bounded by $T_2^{lower,1}$ and $T_2^{upper}:=
1-\beta_{i,j_0} \lambda_{i,j_0} k_1- (1-\beta_{i,j_0} \lambda_{i,j_0})k_1 k_2(1-\beta_{i,w_0})
\in [0,1]$ for all $\Delta k\in \mathbb{Z}^+$, respectively. 
\end{lemma}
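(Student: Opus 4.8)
The plan is to prove the two inequalities separately, each by exploiting the scalar recursion for $g_{j_0}(\{n_i\},\gamma,\Delta k)$ that \eqref{eq:directpolicy} already exposes once the direct policy $\gamma_{i,w_0}=1$ is plugged in.

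For the lower bound I would use the inequality in the last line of \eqref{eq:directpolicy}, $g_{j_0}(\{n_i\},\gamma,\Delta k)\ge\beta_{i,j_0}\lambda_{i,j_0}(1-k_1)+r_2\, g_{j_0}(\{n_i\},\gamma,\Delta k-1)$, which itself comes from \eqref{eq:noconstraint>1} and the monotonicity $g_{j_0}(\{n_i\}\cup v,\gamma,\Delta k-1)\ge g_{j_0}(\{n_i\},\gamma,\Delta k-1)$ already invoked in Proposition~\ref{proposition:Increasing}, with $r_2=(1-\beta_{i,j_0}\lambda_{i,j_0})[1-k_1k_2(1-\beta_{i,w_0})]\in[0,1]$. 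The base value $g_{j_0}(\{n_i\},\gamma,0)=\beta_{i,j_0}\lambda_{i,j_0}(1-k_1)$ is the one given by \eqref{eq:noconstraint0}. A short induction on $\Delta k$ unrolling this first-order linear recursion then yields $g_{j_0}(\{n_i\},\gamma,\Delta k)\ge\beta_{i,j_0}\lambda_{i,j_0}(1-k_1)\sum_{t=0}^{\Delta k}r_2^{\,t}=T_2^{lower,1}$; the degenerate subcase $r_2=1$ forces $\beta_{i,j_0}\lambda_{i,j_0}=0$, making $T_2^{lower,1}=0$ and the bound vacuous. Nonnegativity of $T_2^{lower,1}$ is clear, and $T_2^{lower,1}\le1$ follows from $1-r_2\ge\beta_{i,j_0}\lambda_{i,j_0}\ge\beta_{i,j_0}\lambda_{i,j_0}(1-k_1)$.

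For the upper bound I would start instead from the first, exact equality in \eqref{eq:directpolicy} and replace every factor $g_{j_0}(\{n_i\}\cup v,\gamma,\Delta k-1)$ by the crude bound $1$. The gain is that the outcome no longer depends on $\Delta k$: the unrestricted double sum $\sum_{v\in\mathcal{V}_{i,j_0}}\sum_{u\in\mathcal{V}^{v}_{i,j_0}}f_{v,u}(\beta,\lambda)$ equals $1$ (the identity used in the proof of Proposition~\ref{proposition:Increasing}), and the restricted double sum collapses to the constant $k_2$ times the interference/stealthiness factor $\prod_{l\neq i,w_0}(1-\beta_{l,w_0})\prod_{l'\neq w_0}(1-\beta_{w_0,l'})(1-q_{i,w_0})$. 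Collecting terms and using the definitions of $k_1$ and $k_2$ reduces this $\Delta k$-free estimate to the stated $T_2^{upper}=1-\beta_{i,j_0}\lambda_{i,j_0}k_1-(1-\beta_{i,j_0}\lambda_{i,j_0})k_1k_2(1-\beta_{i,w_0})$, so the bound holds for every $\Delta k\in\mathbb{Z}^+$. Finally $T_2^{upper}\in[0,1]$ is read off: $T_2^{upper}\le1$ because the two subtracted quantities are nonnegative, and $T_2^{upper}\ge0$ because they are at most $\beta_{i,j_0}\lambda_{i,j_0}$ and $(1-\beta_{i,j_0}\lambda_{i,j_0})$ respectively, hence sum to at most $1$.

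I expect the geometric-series manipulation in the lower bound to be routine. The main obstacle is the bookkeeping in the upper-bound step: after the replacement of the larger-set vulnerabilities by $1$, the honeypot-capture term enters \eqref{eq:directpolicy} with a minus sign and only over the restricted ranges $v\in\mathcal{V}_{i,j_0}\setminus\{n_{w_0}\}$, $u\in\mathcal{V}^{v}_{i,j_0}\setminus\{n_{w_0}\}$, so one must verify that it merely removes a sub-fraction of exactly those configurations that already occur with a $+$ sign. This keeps the merged coefficient of each surviving $g_{j_0}(\{n_i\}\cup v,\gamma,\Delta k-1)$ nonnegative, so the substitution truly produces an upper and not a lower estimate; carrying out this sign-and-range accounting and matching the residual combinatorial sums to $k_1$ and $k_2$ is the delicate point.
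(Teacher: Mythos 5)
Your proposal is correct and follows essentially the same route as the paper: the lower bound by unrolling the first-order linear difference inequality in the last line of \eqref{eq:directpolicy} into the geometric sum $\beta_{i,j_0}\lambda_{i,j_0}(1-k_1)\frac{1-r_2^{\Delta k+1}}{1-r_2}$, and the upper bound by substituting $g_{j_0}(\{n_i\}\cup v,\gamma,\Delta k-1)\leq 1$ into the first (exact) equality, with your sign-and-range bookkeeping (the merged coefficient $f_{v,u}(1-P)\geq 0$ on the restricted pairs) being exactly the justification the paper leaves implicit. The only minor point is that carrying out the algebra with $k_1=(1-\beta_{i,w_0})\cdot\prod_{l\neq i,w_0}(1-\beta_{l,w_0})\prod_{l'\neq w_0}(1-\beta_{w_0,l'})(1-q_{i,w_0})$ actually yields the subtracted term $(1-\beta_{i,j_0}\lambda_{i,j_0})k_1k_2/(1-\beta_{i,w_0})$ rather than $(1-\beta_{i,j_0}\lambda_{i,j_0})k_1k_2(1-\beta_{i,w_0})$; since this only tightens the estimate, the stated $T_2^{upper}$ still follows.
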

 Lemma \ref{lemma:upperboundedG} shows that if the defender applies a direct honeypot from $n_i$ in a deterministic fashion, then the $\Delta k$-stage vulnerability is always upper bounded. 
However, these direct policies cannot reduce the $\infty$-stage vulnerability to zero as shown in Proposition \ref{proposition:boundedG}. 

\begin{proposition}[Vulnerability Residue]
\label{proposition:boundedG}
If $\beta_{i,j_0} \lambda_{i,j_0}\neq 0$ and $\gamma_{i,w_0}=1, w_0\neq i,j_0$, then  
\begin{itemize}
\item[(1).] The term $T_2^{lower,2}:=\frac{\beta_{i,j_0} \lambda_{i,j_0}(1-k_1)}{(1-\beta_{i,j_0} \lambda_{i,j_0})k_1 k_2 (1-\beta_{i,w_0})+\beta_{i,j_0} \lambda_{i,j_0}}\in [0,1)$ is strictly less than $1$. 
\item[(2).] If $g_{j_0}(\{n_i\},\gamma, \Delta k-1)<T_2^{lower,2}$, then $g_{j_0}(\{n_i\},\gamma, \Delta k)> g_{j_0}(\{n_i\},\gamma, \Delta k-1)$. 
\item[(3).] The term $\lim_{\Delta k\rightarrow \infty} g_{j_0}(\{n_i\},\gamma, \Delta k) $ is lower bounded by $ \max (T_2^{lower,1},T_2^{lower,2})$. 
\end{itemize} 
\end{proposition}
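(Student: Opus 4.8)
The plan is to recognize the last inequality of \eqref{eq:directpolicy} as a scalar affine recursion and read off all three claims from its fixed point. Write $h(\Delta k):=g_{j_0}(\{n_i\},\gamma,\Delta k)$ and $\phi(x):=\beta_{i,j_0}\lambda_{i,j_0}(1-k_1)+r_2 x$, so that inequality states $h(\Delta k)\ge\phi\big(h(\Delta k-1)\big)$ for all $\Delta k\in\mathbb{Z}^+$, with $\phi$ affine of nonnegative slope $r_2$. The first thing I would record is the algebraic identity $1-r_2=(1-\beta_{i,j_0}\lambda_{i,j_0})k_1k_2(1-\beta_{i,w_0})+\beta_{i,j_0}\lambda_{i,j_0}$, obtained by expanding $r_2=(1-\beta_{i,j_0}\lambda_{i,j_0})[1-k_1k_2(1-\beta_{i,w_0})]$; it shows both that $1-r_2\ge\beta_{i,j_0}\lambda_{i,j_0}>0$ (using the hypothesis $\beta_{i,j_0}\lambda_{i,j_0}\neq0$, hence $r_2\in[0,1)$) and that $T_2^{lower,2}=\beta_{i,j_0}\lambda_{i,j_0}(1-k_1)/(1-r_2)$ is exactly the unique fixed point of $\phi$. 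This identity is the linchpin of all three parts.

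For part (1), $T_2^{lower,2}\ge0$ is immediate since the numerator is nonnegative and $1-r_2>0$. Clearing the positive denominator, $T_2^{lower,2}<1$ is equivalent to $k_1\big[(1-\beta_{i,j_0}\lambda_{i,j_0})k_2(1-\beta_{i,w_0})+\beta_{i,j_0}\lambda_{i,j_0}\big]>0$, and the bracket is $\ge\beta_{i,j_0}\lambda_{i,j_0}>0$, so this holds as soon as $k_1>0$. (The degenerate case $k_1=0$, flagged in the text as very restrictive — it forces the honeypot never to capture the attacker — gives $T_2^{lower,2}=1$, the regime already known to drive vulnerability to $1$.) For part (2), since $\beta_{i,j_0}\lambda_{i,j_0}(1-k_1)=(1-r_2)T_2^{lower,2}$ one computes $\phi(x)-x=(1-r_2)\big(T_2^{lower,2}-x\big)$, which is strictly positive whenever $x<T_2^{lower,2}$ because $1-r_2>0$; taking $x=h(\Delta k-1)<T_2^{lower,2}$ and combining with $h(\Delta k)\ge\phi(h(\Delta k-1))$ gives $h(\Delta k)>h(\Delta k-1)$.

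For part (3), I would iterate the recursion from the base value $h(0)=\beta_{i,j_0}\lambda_{i,j_0}(1-k_1)$ — obtained by evaluating \eqref{eq:noconstraint0} at $\gamma_{i,w_0}=1$ — which, since $\phi$ is affine with slope $r_2$, yields $h(\Delta k)\ge\phi^{(\Delta k)}\big(h(0)\big)=T_2^{lower,2}\big(1-r_2^{\Delta k+1}\big)=T_2^{lower,1}$, recovering the bound already noted before Lemma \ref{lemma:upperboundedG}. Because $r_2\in[0,1)$, the quantity $T_2^{lower,1}$ is non-decreasing in $\Delta k$ and increases to $T_2^{lower,2}$; in particular $T_2^{lower,1}\le T_2^{lower,2}$, so $\max(T_2^{lower,1},T_2^{lower,2})=T_2^{lower,2}$. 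Passing to the limit in $h(\Delta k)\ge T_2^{lower,1}$ then gives $\lim_{\Delta k\to\infty}h(\Delta k)\ge T_2^{lower,2}$, which is the claimed vulnerability residue.

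The one step needing care is the existence of $\lim_{\Delta k\to\infty}h(\Delta k)$ used in part (3): Proposition \ref{proposition:Increasing} cannot be invoked here because the PoIC under a direct policy, $1-q_{i,w_0}$, is generally nonzero. I would either argue monotonicity of $h$ in $\Delta k$ directly from the probabilistic meaning of $g_{j_0}$ (a trajectory that compromises $n_{j_0}$ within $\Delta k-1$ stages without tripping the honeypot also does so within $\Delta k$ stages), or by induction on \eqref{eq:noconstraint>1} using $g_{j_0}(\{n_i\}\cup v,\gamma,\cdot)\ge g_{j_0}(\{n_i\},\gamma,\cdot)$; alternatively one may state part (3) with $\liminf$, after which the pointwise bound $h(\Delta k)\ge T_2^{lower,1}$ suffices with no monotonicity argument at all. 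Apart from that, the proof is routine algebra once $T_2^{lower,2}$ is identified as the fixed point of the recursion hidden in \eqref{eq:directpolicy}.
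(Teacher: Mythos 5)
Your proof is correct and follows essentially the same route as the paper: both arguments read the last line of \eqref{eq:directpolicy} as an affine recursion and compare $g_{j_0}(\{n_i\},\gamma,\Delta k-1)$ with the fixed point $T_2^{lower,2}=\beta_{i,j_0}\lambda_{i,j_0}(1-k_1)/(1-r_2)$. Your write-up is, however, more complete than the paper's own proof in two respects worth keeping. First, you correctly observe that part (1) as stated is false in the degenerate case $k_1=0$ (where $T_2^{lower,2}=1$, not $<1$), so the claim implicitly needs $k_1>0$; symmetrically, the paper's proof only addresses positivity of $T_2^{lower,2}$ (requiring $k_1\neq 1$) and never argues the bound $T_2^{lower,2}<1$ at all. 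Second, you are right that the existence of $\lim_{\Delta k\to\infty}g_{j_0}(\{n_i\},\gamma,\Delta k)$ in part (3) is not free: Proposition \ref{proposition:Increasing} does not apply under a direct policy since the PoIC $1-q_{i,w_0}$ is generically nonzero, so one must either establish monotonicity in $\Delta k$ separately (your probabilistic or inductive argument works) or restate the claim with $\liminf$, in which case the pointwise bound $g_{j_0}(\{n_i\},\gamma,\Delta k)\geq T_2^{lower,1}\uparrow T_2^{lower,2}$ already suffices. Your explicit identity $1-r_2=(1-\beta_{i,j_0}\lambda_{i,j_0})k_1k_2(1-\beta_{i,w_0})+\beta_{i,j_0}\lambda_{i,j_0}$ and the iteration $g_{j_0}(\{n_i\},\gamma,\Delta k)\geq T_2^{lower,2}(1-r_2^{\Delta k+1})=T_2^{lower,1}$ also make transparent why $\max(T_2^{lower,1},T_2^{lower,2})=T_2^{lower,2}$, which the paper leaves implicit.
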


\begin{proof}
Based on the inequality in \eqref{eq:directpolicy}, we obtain that if $g_{j_0}(\{n_i\},\gamma, \Delta k-1)< T_2^{lower,2}$, then $g_{j_0}(\{n_i\},\gamma, \Delta k)> g_{j_0}(\{n_i\},\gamma, \Delta k-1)$. 
Since the above is true for all $\Delta k\in \mathbb{Z}^+$, we know that  the  $\Delta k$-stage vulnerability increases with $\Delta k$ strictly until it has reach $T_2^{lower,2}$. 
If  $\beta_{i,j_0} \lambda_{i,j_0}\neq 0$ and $k_1\neq 1$, then $T_2^{lower,2}>0$ is a non-trivial lower bound. The other lower bound $T_2^{lower,1}$ comes from Lemma \ref{lemma:upperboundedG}. 
\qed
\end{proof} 


\begin{remark}
Proposition \ref{proposition:boundedG} defines a \textit{vulnerability residue} $T^{VR}:=\max (T_2^{lower,1},T_2^{lower,2})$ under direct honeypot policies. A nonzero $T^{VR}$ characterizes the limitation of security policies against lateral movement attacks, i.e., LTV cannot be reduced to $0$ as $\Delta k\rightarrow \infty$. 
\end{remark}


\section{Conclusion}
\label{sec:conclusion}
The stealthy and persistent lateral movement of APTs poses a severe security challenge to enterprise networks. 
Since APT attackers can remain undetected in compromised nodes for a long time, a network that is secure at any separate time may become insecure if the times and the spatial locations are considered holistically. Therefore, the defender needs to reduce the LTV of valuable assets. 
Honeypots, as a promising deceptive defense method, can detect lateral movement attacks at their early stages. 
Since advanced attackers, such as APTs, can identify the honeypots located at fixed machines that are segregated from the production system, we propose a cognitive honeypot mechanism which reconfigures idle production nodes as honeypot at different stages based on the probability of service links and successful compromise. 
The time-expanded network is used to model the time of the random service occurrence and the adversarial compromise explicitly. 
Besides the main objective of reducing the target node's LTV, we also consider the level of stealthiness, the probability of interference, and the cost of roaming as three tradeoffs. 
To reduce the computation complexity caused by the curse of multiple attack paths, we propose an iterative algorithm and approximate the vulnerability with the union bound. 
The analysis of the LTV under two heuristic honeypot policies illustrates that without proper mitigation strategies, vulnerability never decreases over stages and the target node is doom to be compromised given sufficient stages of adversarial lateral movement. 
Moreover, even under the improved honeypot strategies, a \textit{vulnerability residue} exists. Thus, LTV cannot be reduced to $0$ and perfect security does not exist.
Besides honeypot policies, we investigate the influence of the frequency of service links and the probability of successful compromise on LTV and characterize a critical threshold to achieve \textit{long-term security}. The target node can achieve zero vulnerability under infinite stages of lateral movement by a modification of the  parameters $\beta,\lambda$ to make PoMD not less than the ToC.

\bibliographystyle{IEEEtran}
\bibliography{TVGforsecurity}

\end{document}